\documentclass[version=preprint, biblatex]{iacrcc}
\usepackage{amsmath, amsthm, amsfonts, relsize, units, amssymb, setspace, enumitem, float, booktabs, tabularx, color, boxedminipage, latexsym, mathtools, enumitem, float, parskip, blindtext, newclude, algorithm, algorithmicx, algpseudocode, graphicx}
\title[running  = {Withdrawable FS},
       subtitle = {}
      ]{Withdrawability in Fiat–Shamir with aborts constructions}

\addauthor[orcid   = {0000-0002-8959-636X},
           inst    = {1},
           email   = {ramses.fernandez@fairgate.io},
           surname = {Fernandez},
          ]{Ramses Fernandez}

\addaffiliation[country = {}]{Fairgate Labs}

\authorrunning{Ramses Fernandez}
\keywords[Withdrawability, Digital signatures, Module Learning with Errors]{Withdrawability, Digital signatures, Module Learning with Errors}

\newcommand{\Ring}{R}
\newcommand{\Rq}{R_q}
\newcommand{\bA}{\mathbf{A}}
\newcommand{\bB}{\mathbf{B}}
\newcommand{\bs}{\mathbf{s}}
\newcommand{\by}{\mathbf{y}}
\newcommand{\bz}{\mathbf{z}}
\newcommand{\bw}{\mathbf{w}}
\newcommand{\bt}{\mathbf{t}}
\newcommand{\bI}{\mathbf{I}}
\newcommand{\sample}{\xleftarrow{\$}}
\newcommand{\modpm}{\mathbin{\mathrm{mod}^{\pm}}}
\newcommand{\HighBits}{\mathsf{HighBits}}
\newcommand{\LowBits}{\mathsf{LowBits}}
\newcommand{\KeyGen}{\mathsf{KeyGen}}
\newcommand{\Sign}{\mathsf{Sign}}
\newcommand{\Verify}{\mathsf{Verify}}
\newcommand{\WSign}{\mathsf{WSign}}
\newcommand{\WSVerify}{\mathsf{WSVerify}}
\newcommand{\Confirm}{\mathsf{Confirm}}
\newcommand{\CVerify}{\mathsf{CVerify}}
\newcommand{\Simul}{\mathsf{Simul}}
\newcommand{\Adv}{\mathsf{Adv}}
\newcommand{\Exp}{\mathsf{Exp}}
\newcommand{\Ora}[1]{\mathcal{O}^{\mathsf{#1}}}
\newcommand{\negl}{\mathsf{negl}}
\newcommand{\norm}[1]{\lVert #1 \rVert}
\newcommand{\norminf}[1]{\lVert #1 \rVert_{\infty}}
\newcommand{\adv}{\mathcal{A}}

\newcommand{\PKE}{\mathsf{PKE}}
\newcommand{\Dec}{\mathsf{Dec}}
\newcommand{\Enc}{\mathsf{Enc}}
\newcommand{\eufcma}{\mathrm{EUF\text{-}CMA}}
\newcommand{\eufic}{\mathrm{EUF\text{-}CMA\text{-}ic}}
\newcommand{\Setup}{\mathsf{Setup}}
\DeclareMathOperator{\SVP}{SVP}

\newtheorem{assumption}{Assumption}

\begin{document}
\maketitle

\begin{abstract}
This article presents an extension of the work performed by Liu, Baek and Susilo \cite{liu2023} on withdrawable signatures to the Fiat--Shamir with aborts paradigm. We introduce an abstract construction, and provide security proofs for this proposal. As an instantiation, we provide a concrete withdrawable signature scheme based on a no-hint, full-$\bt$ Dilithium-style Fiat--Shamir-with-aborts construction \cite{ducas2017}; adapting to production ML-DSA (with hints) introduces a small $\varepsilon_{\mathrm{zk}}$ term.
\end{abstract}

\section{Introduction}
Digital signatures serve as a fundamental cryptographic mechanism that enables entities to bind their identities to pieces of information. The essential purpose of a digital signature is to allow a signer, who has established a public key $pk$, to sign messages using their private key $sk$ in a way that enables anyone knowing $pk$ to verify both the message's origin and its integrity during transit.

An important paradigm for the creation of digital signatures is the Fiat--Shamir transform, which converts interactive identification protocols into non-interactive digital signature schemes. Starting with a three-move identification protocol, where a prover demonstrates knowledge to a verifier through commitment, challenge, and response steps, the transform replaces the verifier's random challenge with a hash function applied to both the commitment and the message. This creates a digital signature scheme where the signing algorithm computes a commitment, generates a challenge by hashing the commitment with the message, and produces a response using the secret key.

The Schnorr signature scheme is perhaps the most well-known application of the Fiat--Shamir transform, which has gained particular attention due to its security characteristics and its valuable properties, such as signature aggregation. These advantages make Schnorr signatures especially attractive for blockchain applications where transaction size reduction and privacy enhancement are crucial considerations.

The impact of digital signatures is particularly important in blockchain technology, where this primitive extends beyond basic transaction authentication, enabling sophisticated smart contract interactions, multi-signature schemes for enhanced security, and threshold signature systems for distributed key management. Furthermore, innovations in signature aggregation and batch verification techniques have contributed significantly to blockchain scalability solutions.

Current public-key cryptographic algorithms serve as the foundation for protecting sensitive electronic information from unauthorized access. These algorithms have successfully withstood attacks from conventional computing systems for decades, due to the hardness of their underlying mathematical problems, prime factorization and the computation of discrete logarithms. However, the emergence of quantum computing presents a significant challenge to this security paradigm, as Shor's algorithm demonstrates the potential to solve both the prime factorization and the computation of discrete logarithms efficiently. This means that quantum computers possess computational capabilities that could potentially compromise current cryptographic methods, exposing vulnerable data and information.

To address this impending challenge, new cryptographic approaches are being designed to withstand attacks from both traditional computers and future quantum systems. These methods rely on problems, such as lattices, error-correcting codes or isogenies of elliptic curves, which have enhanced mathematical structure, leading to computational problems assumed to be hard both for classical and quantum computers. This framework is known as post-quantum cryptography, which represents a critical advancement in information security, ensuring that digital assets remain protected against evolving technological threats.

The Fiat--Shamir transform can be extended to lattices, leading to the Fiat--Shamir with aborts paradigm due to Lyubashevsky \cite{lyubashevsky2009}. Fiat--Shamir with aborts provides a framework for constructing digital signature schemes with provable security. This methodology addresses the challenge of generating signatures from lattice-based one-way functions by introducing a controlled rejection sampling technique, aborting. When a potential signature might reveal information about the secret key, the signing algorithm simply aborts and restarts the process. This paradigm converts an interactive identification protocol with a non-negligible probability of aborting into a signature scheme through iterative execution until a successful completion occurs, the aborting procedure. This transformation eliminates the need for interaction by substituting the verifier's challenge with a hash function evaluation, which security analyses treat as a random oracle.

Amongst the main constructions based on the Fiat--Shamir with aborts mechanism, we find Dilithium \cite{ducas2017} and HAETAE \cite{cheon2024}. Dilithium derives its security from the difficulty of solving certain lattice problems, specifically the Module Learning with Errors and Module Short Integer Solution problems. The importance of the scheme comes from the effective balance between security, signature size, and computational efficiency, making it practical for real-world implementations. HAETAE has been specifically designed to produce more compact and efficiently maskable signatures. While built upon the Fiat--Shamir with aborts paradigm underpinning Dilithium, HAETAE introduces design choices that optimize the complexity-to-compactness ratio, which is particularly important in space-constrained implementation scenarios.

Digital signatures are designed to provide authenticity, integrity, and non-repudiation for signed messages. Once a signature is created, it remains valid indefinitely, and the signer cannot rescind it. This permanence, while typically considered a feature, raises an important question: is it possible for signers to efficiently revoke their signatures without compromising their private keys or affecting the validity of their other signatures? Liu, Baek, and Susilo address this challenge by introducing the concept of withdrawable signatures \cite{liu2023}, offering a practical and secure mechanism for signature revocation in situations where this capability is desirable.

The practical applications of withdrawable signatures span multiple domains where signature revocation capability is essential without compromising the signer's private key. In blockchain-based smart contracts, these signatures enable participants to commit to contract conditions while maintaining the ability to revoke their commitment, particularly valuable when contract fulfillment depends on multiple parties or external conditions. Within decentralized e-voting systems, withdrawable signatures provide voters the security to cast their votes while retaining the flexibility to modify their choices before final vote tabulation, allowing voters to respond to new information or developments during the voting period. Additionally, in decentralized escrow services, these signatures facilitate multi-party transactions by allowing participants to revoke their signatures if circumstances change or disputes arise, without compromising the security of other parties' signatures.

\subsection{Contributions}
This paper builds upon the research conducted by Liu, Baek, and Susilo on withdrawable signatures \cite{liu2023} to present a general methodology for constructing post-quantum withdrawable signatures based on Lyubashevsky's Fiat--Shamir with aborts paradigm \cite{lyubashevsky2009}.

The work first introduces a comprehensive abstract construction that takes a mechanism based on the Fiat--Shamir with aborts paradigm as its starting point and extends it to incorporate withdrawability features. To demonstrate the practical applications of this abstract framework, the paper presents a concrete withdrawable construction from a no-hint, full-$\bt$ Dilithium-style variant \cite{ducas2017}; the no-hint variant keeps the naHVZK proof exact (perfect, $\varepsilon_{\mathrm{zk}}=0$), whereas production ML-DSA requires handling hints and incurs a small $\varepsilon_{\mathrm{zk}}$.

The integration of withdrawability features with post-quantum cryptography addresses two crucial challenges in modern cryptographic systems. First, it provides the flexibility to revoke signatures when needed, a capability increasingly important in dynamic digital environments. Second, it ensures this functionality remains secure against quantum computing threats.

\subsection{Related work}
The research in this paper revolves around three gravity centres, namely: the work done by Lyubashevsky, the construction of Liu, Baek and Susilo, and the digital scheme Dilithium.

The Fiat--Shamir heuristic \cite{fiat2000} provides a method for transforming public-coin interactive proof systems into digital signatures. This transformation works by substituting the verifier's public coin tosses with hash function evaluations. In the random oracle model (ROM), these hash functions are treated as uniform functions that adversaries can access through classical computing methods.

Schnorr's signature scheme stands as a prominent implementation of the Fiat--Shamir heuristic, with its security fundamentally based on the discrete logarithm problem. However, the emergence of quantum computing necessitates two critical adaptations to this framework. First, the discrete logarithm hardness assumption must be replaced with quantum-resistant alternatives. Second, the security model must account for quantum access to the random oracle (QROM), as quantum adversaries can query the hash function in superposition.

Lyubashevsky's work \cite{lyubashevsky2009} introduced an innovative lattice-based signature scheme that builds upon Schnorr's design while incorporating abortion as a crucial modification. The abort mechanism ensures that the signature distribution remains independent of the signing key, preventing potential attacks against the signature scheme. The protocol manages these aborts through a loop structure, continuing iterations until a successful execution occurs without an abort.

This modified approach, known as Fiat--Shamir with aborts, maintains the fundamental concept of replacing non-final verifier steps with hash function evaluations while adapting to the requirements of lattice-based cryptography. The integration of the abort mechanism represents a significant advancement in developing quantum-resistant signature schemes, providing a robust framework for cryptographic security in the post-quantum era.

On the other hand, Liu, Baek and Susilo propose a designated-verifier signature scheme that introduces withdrawability to digital signatures. Their construction generates a withdrawable signature $\sigma$ for a message $\mu$, distinct from conventional signature schemes.

The scheme's construction centers on a transformation mechanism. When a signer generates a withdrawable signature, it remains verifiable only by the designated verifier. The signature can then follow two paths: it either remains in its withdrawable state through signer inaction, effectively withdrawing the signature, or undergoes transformation through a ``Confirm'' algorithm. This algorithm converts the withdrawable signature $\sigma$ into a confirmed signature $\tilde{\sigma}$, which becomes verifiable through both parties' public keys while maintaining a deterministic relationship to $\sigma$.

The formal construction involves two entities, signer and verifier, with their public keys comprising a set $\pi=\{pk_s, pk_v\}$, respectively. The scheme utilizes the underlying signature structure to construct a withdrawable signature $\sigma$ specifically designated for the verifier. Subsequently, using the secret key $sk_s$ and $\sigma$, the signer can generate a verifiable signature for $\mu$ through the public key set $\pi$. This confirmed signature $\tilde{\sigma}$ maintains a cryptographic link to the original withdrawable signature $\sigma$ via the public key set $\pi$, ensuring the signature transformation's verifiability and traceability.

Finally, we find Dilithium \cite{ducas2017}, which builds upon the Fiat--Shamir with aborts paradigm. The scheme operates over module lattices, specifically using the module learning with errors (MLWE) and the module short integer solution (MSIS) problems as its primary security foundations. Its construction employs a ring $R_q=\mathbb{Z}_q[x]/(x^n+1)$ with carefully chosen parameters $n$ and $q$. The signature scheme utilizes matrices and vectors over this ring, with dimensions selected to balance security and efficiency.

The key generation in Dilithium produces a public key containing a matrix $\bA\in\Rq^{k\times l}$ and a vector $\bt=\bA\bs_1+\bs_2$, where $\bs_1$ and $\bs_2$ are secret vectors with small coefficients. The signing process involves generating a masking vector $\by$, computing $\bw=\bA\by$, and using a challenge value $c$ derived from the message and $\bw$ to produce the signature. The scheme incorporates rejection sampling to ensure signature security, leading to probabilistic signature generation.

\section{Preliminaries}
\subsection{Notation}
We write $\Ring$ and $\Rq$ to denote the rings $\mathbb{Z}[x]/(x^n+1)$ and $\mathbb{Z}_q[x]/(x^n+1)$ respectively, where $q$ is an integer. We will denote vectors by bold letters, and regular font letters denote elements in $\Ring$ or $\Rq$. Unless otherwise specified, all vectors are assumed to be column vectors. Bold upper-case letters denote matrices.

If $S$ is a set, then $a\sample S$ means that $a$ is chosen uniformly at random from $S$. All logarithms are assumed to be base $2$, and the hash function $H(\cdot,\cdot)$ will operate on the concatenation of its arguments.

For any even positive integer $\alpha$, we define $r'=r\modpm\alpha$ to be the unique element $r'$ in the range $-\tfrac{\alpha}{2}<r'\le\tfrac{\alpha}{2}$ such that $r'\equiv r \bmod \alpha$. For any odd positive integer $\alpha$, we define $r'=r\modpm\alpha$ to be the unique element $r'$ in the range $-\tfrac{\alpha-1}{2}\le r'\le\tfrac{\alpha-1}{2}$ such that $r'\equiv r\bmod\alpha$. We refer to these operations as centered reductions modulo $\alpha$.

For any positive integer $\alpha$, we define $r'=r\bmod^{+}\alpha$ to be the unique element $r'$ in the range $0\le r'<\alpha$ such that $r'\equiv r\bmod\alpha$. When the specific representation is not significant, we simply write $r\bmod\alpha$.

For an element $w\in\mathbb{Z}_q$, we denote $\norminf{w}$ to mean $|w\modpm q|$. For $w=\sum_{i=0}^{n-1}w_i x^i\in\Ring$, we define $\norminf{w}=\max_i\norminf{w_i}$ and $\norm{w}=\sqrt{\norminf{w_0}^2+\dots+\norminf{w_{n-1}}^2}$. For $\bw=(w_1,\dots,w_k)\in\Ring^k$: $\norminf{\bw}=\max_i\norminf{w_i}$ and $\norm{\bw}=\sqrt{\norm{w_1}^2+\dots+\norm{w_k}^2}$. We denote by $S_\alpha$ the set of all elements $w\in\Ring$ such that $\norminf{w}\le\alpha$ for $\alpha\in\Ring$. There will be an abuse of notation: we write $S$, with no subindex, to indicate that bounds in this case are given by the underlying scheme.

From \cite{ducas2017} we set $\beta$ as the maximum possible coefficient of $c\bs_i$, $\gamma_1$ is large enough so the signature does not reveal the secret key and small enough so that the signature is not forged, $\gamma_2$ is the low-order rounding bound with $2\gamma_2\mid q-1$ (concretely $\gamma_2=(q-1)/88$ for level 2 and $(q-1)/32$ for levels~3 and~5, so $\gamma_2\neq\gamma_1/2$ in general), and $\eta$ is a small integer. We write $\mathcal{C}=B_\tau$ for the challenge set of weight $\tau$ (Definition~\ref{def:stmsis}); the concrete weights are $\tau\in\{39,49,60\}$ for the three parameter sets, and $M=|\mathcal{C}|=2^{\tau}\binom{n}{\tau}$. Fix an efficiently invertible bijection $\iota:\mathbb{Z}_M\to\mathcal{C}$ (for instance the indexing underlying \textsf{SampleInBall}), with inverse $\iota^{-1}:\mathcal{C}\to\mathbb{Z}_M$. We use a challenge hash $H_M:\{0,1\}^*\to\mathbb{Z}_M$. Let $\mathsf{HB}(\cdot)$ denote the recoverable (high-order) part of a commitment, i.e.\ for every accepted transcript $\mathsf{HB}(\bA\bz-c\bt)=\mathsf{HB}(\bA\by)$; in Dilithium $\mathsf{HB}(\cdot)=\HighBits(\cdot,2\gamma_2)$.

Let $\PKE=(\mathsf{KGen}_e,\mathsf{Enc},\mathsf{Dec})$ be a public-key encryption scheme satisfying \emph{ciphertext pseudorandomness}: for honestly generated $ek$ and any message $K$, $\mathsf{Enc}(ek,K)$ is computationally indistinguishable from a uniform element of the ciphertext space given $ek$ (but not $dk$); we write its advantage $\Adv^{\mathrm{pr}}_{\PKE}$. Ciphertext pseudorandomness implies both IND- CPA and key-privacy (anonymity), but neither of those implies it. We use $\PKE$ to encrypt a chosen symmetric key $K$. The CPA-secure module-LWE encryption underlying ML-KEM (Kyber) is a candidate instantiation: its ciphertext components are pseudorandom under decisional MLWE, and pseudorandomness is understood with respect to the actual (compressed) ciphertext encoding space, up to the standard compression/statistical terms. Let $G:\{0,1\}^\kappa\to \{0,1\}^*$ be a pseudorandom generator (XOF). Each party holds, in addition to its lattice keys, an encryption key pair $(ek,dk)\leftarrow\mathsf{KGen}_e(1^\kappa)$; for a key set $\pi=\{pk_s,pk_v\}$ we write $ek_s,ek_v$ (resp.\ $dk_s,dk_v$) for the signer's and verifier's encryption keys.

Public parameters $pp\leftarrow\Setup(1^\kappa)$ fix the common matrix $\bA\in\Rq^{k\times l}$ (with the $\PKE$ and PRG descriptions); every key pair is generated as $(pk,sk)\leftarrow\KeyGen(pp)$ relative to the same $\bA$, so the two OR-proof branches share a module matrix.

\subsection{Basic definitions}
A withdrawable signature scheme involves two participating parties: signers and verifiers. The scheme operates in two primary stages: first, the generation of a withdrawable signature, and second, its transformation into a confirmed signature. Both stages are executed by the signer. Concerning the security of a withdrawable signature scheme $WS$, it is established through three properties, namely: correctness, unforgeability under insider corruption, and withdrawability.
\begin{enumerate}
  \item Correctness establishes a strong relation between the verification algorithms: if a withdrawable signature $\sigma$ is successfully verified through the $\WSVerify$ algorithm, then its corresponding confirmed signature $\tilde{\sigma}$ must also be verifiable through the $\CVerify$ algorithm.
  \item Unforgeability under insider corruption ensures that only the original signer possesses the capability to transform a verifiable withdrawable signature $\sigma$ (generated using $sk_s$ for verifier $pk_v$) into its corresponding confirmed signature $\tilde{\sigma}$. This requirement holds even when an adversary has obtained the verifier's secret key $sk_v$, maintaining the exclusive control of the signer over the confirmation process.
  \item Finally, withdrawability establishes the indistinguishability of signature origin. Specifically, given a verifiable withdrawable signature $\sigma$, no PPT adversary $\adv$ should be able to determine whether the signature was generated by the signer or the verifier, provided that the $\Confirm$ algorithm has not been executed on $\sigma$. This property effectively ensures that both the signer and the designated verifier possess equivalent capabilities in generating withdrawable signatures.
\end{enumerate}

\paragraph{Threat model.} All results are in the (quantum) random-oracle model, stated for the classical ROM with the QROM variant noted in Remark~\ref{rem:qrom}. Keys are honestly generated unless stated otherwise: withdrawability is proved for honest keys (Theorem~\ref{thm:withdraw}), and Remark~\ref{rem:wd} treats an adversarially chosen verifier key at the cost of a decisional-MLWE term. Unforgeability (Theorem~\ref{thm:unforge}) is under insider corruption, i.e.\ the adversary may hold the designated verifier's secret key. For withdrawability we distinguish third-party adversaries (holding no secret key), defeated by the encryption layer of Section~\ref{sec:sdvs}, from the designated verifier (holding $dk_v$), defeated by the origin ambiguity of the public core.

\begin{definition}[\cite{liu2023}, Section \ref{subsec:instantiation}]\label{def:ws}
A withdrawable signature scheme is composed of five polynomial-time algorithms $\KeyGen,\WSign,\WSVerify,\Confirm,\CVerify$, defined as follows:
\begin{itemize}
  \item $pp\leftarrow\Setup(1^\kappa)$, $(pk,sk)\leftarrow\KeyGen(pp)$: $\Setup$ fixes public parameters $pp$, in particular the common matrix $\bA$ shared by all users, and $\KeyGen(pp)$ outputs one key pair. The signer and verifier keys $(pk_s,sk_s)$, $(pk_v,sk_v)$ are two independent invocations of $\KeyGen(pp)$ and thus share $\bA$; by convention each secret key embeds its own public key, i.e.\ $sk=(pk,\bs_1,\bs_2,dk)$, so that a party can reconstruct $\pi$ from its own $sk$.
  \item $\sigma\leftarrow\WSign(\mu,sk_s,\pi)$: given a message $\mu$, a signer's secret key $sk_s$, and a tuple $\pi=\{pk_s,pk_v\}$ containing both the signer's public key $pk_s$ and a designated verifier's public key $pk_v$ from the set of all public keys $PK$ (those output by $\KeyGen$), the withdrawable signing algorithm generates a withdrawable signature $\sigma$. This signature is specifically bound to message $\mu$ under the signer's identity and can only be verified by the designated verifier $pk_v$.
  \item $1/0\leftarrow\WSVerify(\mu,sk_v,pk_s,\sigma)$: the withdrawable signature verification algorithm uses the designated verifier's secret key $sk_v$ (which includes $dk_v$) to return $1$ iff $\sigma$ is a valid withdrawable signature on $\mu$ under $\pi=\{pk_s,pk_v\}$, and $0$ otherwise. Among third parties only the designated verifier can run it; the signer may also recover the public core (via $dk_s$) for confirmation. The signature is not publicly verifiable.
  \item $\tilde{\sigma}\leftarrow\Confirm(\mu,sk_s,\pi,\sigma)$: the confirmation algorithm transforms a withdrawable signature $\sigma$ into a confirmed signature $\tilde{\sigma}$. It takes as input the original message $\mu$, the signer's secret key $sk_s$, the public key set $\pi$, and the withdrawable signature $\sigma$. The resulting confirmed signature $\tilde{\sigma}$ serves as a publicly verifiable signature with respect to the key set $\pi$.
  \item $1/0\leftarrow\CVerify(\mu,\pi,\sigma,\tilde{\sigma})$: the confirmed signature verification algorithm validates the authenticity of a confirmed signature $\tilde{\sigma}$ on message $\mu$ with respect to the public key set $\pi$. It takes as additional input the original withdrawable signature $\sigma$ from which the confirmed signature was derived. The algorithm outputs $1$ if the confirmed signature is valid and $0$ otherwise.
\end{itemize}
\end{definition}

\begin{definition}\label{def:correct}
A withdrawable signature scheme $WS$ is correct if, for any security parameter $\kappa$, any public key set $\pi$, and any message $\mu\in\{0,1\}^*$, when executing the sequence $\KeyGen,\WSign$, and $\Confirm$, then the corresponding verification algorithms satisfy:
\[
\WSVerify(\mu,sk_v,pk_s,\sigma)=1 \quad\text{and}\quad \CVerify(\mu,\pi,\sigma,\tilde{\sigma})=1
\]
with overwhelming probability (in the security parameter $\kappa$).
\end{definition}

\begin{definition}\label{def:euf}
For a PPT adversary $\adv$ and security parameter $\kappa$, we define the unforgeability under insider corruption experiment $\Exp^{\text{EUF-CMA}}_{WS,\adv}(1^\kappa)$ using the three oracles in Algorithms \ref{alg:corrupt}--\ref{alg:confirm-oracle}.
\end{definition}

\begin{algorithm}
\caption{Corruption oracle $\Ora{Corrupt}_i(\cdot)$}\label{alg:corrupt}
\begin{algorithmic}[1]
\If{$i\neq s$} $CO\leftarrow CO\cup\{i\}$
  \State \textbf{return} $sk_i$
\Else\ \textbf{return} $\bot$
\EndIf
\end{algorithmic}
\end{algorithm}

\begin{algorithm}
\caption{Withdrawable signing oracle $\Ora{WSign}_{sk_s,\pi}(\mu)$}\label{alg:wsign-oracle}
\begin{algorithmic}[1]
\If{$(pk_s\in\pi)\wedge(s\notin CO)$}
  \State $\sigma\leftarrow\WSign(\mu,sk_s,\pi)$
  \State $W\leftarrow W\cup\{\sigma\}$
  \State \textbf{return} $\sigma$
\Else\ \textbf{return} $\bot$
\EndIf
\end{algorithmic}
\end{algorithm}

\begin{algorithm}
\caption{Confirmation oracle $\Ora{Confirm}_{sk_s,\pi}(\mu,\sigma)$}\label{alg:confirm-oracle}
\begin{algorithmic}[1]
\If{$\sigma\in W$}
  \State $M\leftarrow M\cup\{\mu\}$
  \State $\tilde{\sigma}\leftarrow\Confirm(\mu,sk_s,\pi,\sigma)$
  \State \textbf{return} $\tilde{\sigma}$
\Else\ \textbf{return} $\bot$
\EndIf
\end{algorithmic}
\end{algorithm}

Using these three oracles, we define the unforgeability experiment $\Exp^{\text{EUF-CMA}}_{WS,\adv}(1^\kappa)$ as follows:

\begin{algorithm}
\caption{Unforgeability experiment $\Exp^{\text{EUF-CMA}}_{WS,\adv}(1^\kappa)$}\label{alg:euf-exp}
\begin{algorithmic}[1]
\State $pp\leftarrow\Setup(1^\kappa)$
\For{$i=1$ to $m$} $(pk_i,sk_i)\leftarrow\KeyGen(pp)$
\EndFor
\State Select $s,v\in\{1,\dots,m\}$ where $v\neq s$
\State Initialize empty sets $CO\leftarrow\emptyset,\ W\leftarrow\emptyset,\ M\leftarrow\emptyset$
\State $(\mu^*,\sigma^*,\tilde{\sigma}^*)\leftarrow\adv^{\Ora{Corrupt}_i(\cdot),\,\Ora{WSign}_{sk_s,\pi}(\cdot),\,\Ora{Confirm}_{sk_s,\pi}(\cdot,\cdot)}(1^\kappa,\pi^*)$
\If{$(\pi^*=\{pk_s,pk_v\})\wedge(v\in CO)\wedge(\mu^*\notin M)$}
  \If{$(\WSVerify(\mu^*,sk_v,pk_s,\sigma^*)=1)\wedge(\CVerify(\mu^*,\pi^*,\sigma^*,\tilde{\sigma}^*)=1)$} \textbf{return} $1$
  \EndIf
\EndIf
\State \textbf{return} $0$
\end{algorithmic}
\end{algorithm}

A withdrawable signature scheme $WS$ is considered unforgeable under insider corruption with EUF-CMA security if, for all PPT adversaries $\adv$, there exists a negligible function $\negl$ such that:
\[
\Pr[\Exp^{\text{EUF-CMA}}_{WS,\adv}(1^\kappa)=1]\le\negl(1^\kappa).
\]

\begin{definition}\label{def:withdraw}
Let $(pk_0,sk_0),(pk_1,sk_1)\leftarrow\KeyGen(pp)$ be two generated public/secret key pairs, and let $\pi=\{pk_0,pk_1\}$. For a randomly selected bit $b\sample\{0,1\}$, a security parameter $\kappa$, and a PPT adversary $\adv$, we build the withdrawability experiment $\Exp^{\text{Withdraw}}_{WS,\adv}(1^\kappa)$ with the oracle of Algorithm \ref{alg:withdraw-oracle}.
\end{definition}

\begin{algorithm}
\caption{Signing oracle $\Ora{WSign}_{sk_b,\pi}(\cdot)$ for the withdrawability experiment}\label{alg:withdraw-oracle}
\begin{algorithmic}[1]
\If{$\pi=\{pk_0,pk_1\}$}
  \State $\sigma_b\leftarrow\WSign(\mu,sk_b,\pi)$
  \State $M\leftarrow M\cup\{\mu\}$
  \State \textbf{return} $\sigma_b$
\Else\ \textbf{return} $\bot$
\EndIf
\end{algorithmic}
\end{algorithm}

With this signing oracle, we have the following experiment:

\begin{algorithm}
\caption{Withdrawability experiment $\Exp^{\text{Withdraw}}_{WS,\adv}(1^\kappa)$}\label{alg:withdraw-exp}
\begin{algorithmic}[1]
\State $pp\leftarrow\Setup(1^\kappa)$;\ \textbf{for }$i=0,1$: $(pk_i,sk_i)\leftarrow\KeyGen(pp)$
\State $\pi\leftarrow\{pk_0,pk_1\};\ b\sample\{0,1\};\ M\leftarrow\emptyset$
\State $(\mu^*,\mathit{st})\leftarrow\adv^{\Ora{WSign}_{sk_b,\pi}}(pp,\pi)$ \Comment{stage 1: choose
challenge message}
\State $\sigma_b\leftarrow\WSign(\mu^*,sk_b,\pi)$
\State $b'\leftarrow\adv^{\Ora{WSign}_{sk_b,\pi}}(\mathit{st},\sigma_b)$ \Comment{stage 2: guess}
\If{$(b=b')\wedge(\mu^*\notin M)$} \textbf{return} $1$
\EndIf
\State \textbf{return} $0$
\end{algorithmic}
\end{algorithm}

A withdrawable signature scheme $WS$ is withdrawable if, for any PPT adversary $\adv$, and in the absence of the execution of the $\Confirm$ algorithm, there exists a negligible function $\negl$ such that:
\[
\Pr[\Exp^{\text{Withdraw}}_{WS,\adv}(1^\kappa)=1]\le\tfrac{1}{2}+\negl(1^\kappa).
\]

\subsection{Security definitions and computational assumptions}

\begin{definition}\label{def:ds-euf}
For a signature scheme $DS=(\KeyGen,\Sign,\Verify)$ and a PPT adversary $\adv$, consider the following experiment $\Exp^{\text{EUF-CMA}}_{\adv}$:
\begin{enumerate}
  \item The challenger $\mathcal{B}$ generates a key pair $(pk_s,sk_s)\leftarrow\KeyGen(pp)$ using the system parameters $SP$. It provides $pk_s$ to $\adv$ while retaining $sk_s$ to handle signature queries.
  \item $\adv$ receives access to the signing oracle $\Ora{Sign}_{sk_s}(\cdot)$ that computes $\sigma\leftarrow\Sign(\mu,sk_s)$ upon request.
  \item Eventually, $\adv$ outputs a forgery attempt $(\mu^*,\sigma^*)$.
  \item $\adv$ succeeds if $\Verify(\mu^*,pk_s,\sigma^*)=1$ and $\mu^*$ was not previously queried to $\Ora{Sign}_{sk_s}(\cdot)$.
\end{enumerate}
We say that $DS$ is $(t,q_s,\varepsilon)$-secure under EUF-CMA if no adversary running in time $t$ and making at most $q_s$ signing queries can succeed with probability greater than $\varepsilon$.
\end{definition}

\begin{definition}\label{def:dvs}
A designated-verifier signature scheme $DVS$ consists of four probabilistic polynomial-time algorithms operating on key pairs $(pk_s,sk_s)$ for signers and $(pk_d,sk_d)$ for designated verifiers:
\[
DVS=\left\{
\begin{aligned}
&(pk,sk)\leftarrow\KeyGen(pp)\\
&\sigma\leftarrow\Sign(\mu,pk_d,sk_s)\\
&\sigma\leftarrow\Simul(\mu,pk_s,sk_d)\\
&0/1\leftarrow\Verify(\mu,pk_s,sk_d,\sigma)
\end{aligned}
\right\}
\]
The key security property of $DVS$ schemes is non-transferability, which states that for any message-signature pair $(\mu,\sigma)$ that validates under $\Verify$, it should be computationally infeasible to determine whether $\sigma$ was produced by the signer using $\Sign$ or simulated by the designated verifier using $\Simul$, without access to the signer's secret key $sk_s$. The formal definition of this property follows:
\end{definition}

\begin{definition}[Non-transferability]\label{def:nontrans}
For a designated-verifier signature scheme and a PPT adversary $\adv$, consider the non-transferability experiment $\Exp^{\Sign}_{\text{NonTrans},DV,\adv}$:
\end{definition}

\begin{algorithm}
\caption{Non-transferability experiment $\Exp^{\Sign}_{\text{NonTrans},DV,\adv}(1^\kappa)$}\label{alg:nontrans}
\begin{algorithmic}[1]
\State $(pk_s,sk_s),(pk_d,sk_d)\leftarrow\KeyGen(pp)$
\State Provide $\adv$ access to oracles:
\State $\Ora{Sign}_{sk_s,pk_d}(\cdot):\ \sigma_0\leftarrow\Sign(\mu,sk_s,pk_d)$
\State $\Ora{Simul}_{sk_d,pk_s}(\cdot):\ \sigma_1\leftarrow\Simul(\mu,pk_s,sk_d)$
\State $\adv$ outputs message $\mu^*$
\State $\sigma^*_0\leftarrow\Sign(\mu^*,pk_d,sk_s);\quad \sigma^*_1\leftarrow\Simul(\mu^*,pk_s,sk_d)$
\State $b\sample\{0,1\}$
\State Provide $\adv$ with signature $\sigma^*_b$
\State $\adv$ outputs bit $b'$
\If{$b'=b$} \textbf{return} $1$
\Else\ \textbf{return} $0$
\EndIf
\end{algorithmic}
\end{algorithm}

A $DVS$ achieves non-transferability if for any PPT adversary $\adv$, there exists a negligible function $\negl$ such that:
\[
\Pr[\Exp^{\Sign}_{\text{NonTrans},DV,\adv}(1^\kappa)=1]\le\tfrac{1}{2}+\negl(1^\kappa).
\]

The security of our scheme rests upon three fundamental lattice-based hardness assumptions. The Module Learning With Errors (MLWE) assumption provides protection against key-recovery attacks, ensuring the confidentiality of secret keys. The SelfTargetMSIS assumption establishes the security foundation against new message forgery attempts, preventing adversaries from generating valid signatures for previously unsigned messages. Finally, the MSIS assumption is essential for achieving strong unforgeability, which prevents even slight modifications of existing signatures.

\begin{definition}[Module Learning With Errors (MLWE)]\label{def:mlwe}
For integers $m,k$ and a probability distribution $D:\Rq\to[0,1]$, the advantage of an algorithm $\adv$ in solving the decisional $\text{MLWE}_{m,k,D}$ problem over the ring $\Rq$ is defined as:
\begin{align*}
\Adv^{\text{MLWE}}_{m,k,D}:=\big|&\Pr[b=1\mid \bA\leftarrow\Rq^{m\times k};\ \bt\leftarrow\Rq^m;\ b\leftarrow\adv(\bA,\bt)]\\
-&\Pr[b=1\mid \bA\leftarrow\Rq^{m\times k};\ \bs_1\leftarrow D^k;\ \bs_2\leftarrow D^m;\ b\leftarrow\adv(\bA,\bA\bs_1+\bs_2)]\big|.
\end{align*}
\end{definition}

\begin{definition}[Module Short Integer Solution (MSIS)]\label{def:msis}
For an algorithm $\adv$, we define its advantage function $\Adv^{\text{MSIS}}_{m,k,\gamma}$ in solving the (Hermite Normal Form) $\text{MSIS}_{m,k,\gamma}$ problem over the ring $\Rq$ as:
\[
\Adv^{\text{MSIS}}_{m,k,\gamma}(\adv):=\Pr\!\left[0<\norminf{\by}\le\gamma\ \wedge\ [\,\bI\mid\bA\,]\cdot\by=0\ \middle|\ \bA\leftarrow\Rq^{m\times k};\ \by\leftarrow\adv(\bA)\right].
\]
\end{definition}

Let $B_h$ denote the subset of elements in $\Ring$ that have exactly $h$ coefficients equal to either $-1$ or $1$, with all remaining coefficients being $0$. The cardinality of this set is given by $|B_h|=2^h\cdot\binom{n}{h}$.

\begin{definition}[SelfTargetMSIS]\label{def:stmsis}
Let $H:\{0,1\}^*\to B_h$ be a cryptographic hash function. For an algorithm $\adv$, we define its advantage function as:
\[
\Adv^{\text{SelfTargetMSIS}}_{H,m,k,\gamma_1}(\adv):=\Pr\!\left[
\begin{aligned}
&0\le\norminf{\by}\le\gamma_1\ \wedge\ H\big([\,\bI\mid\bA\,]\cdot\by\,\big\|\,M\big)=c\\
&\ \big|\ \bA\leftarrow\Rq^{m\times k};\ (\by:=(\mathbf{r},c),M)\leftarrow\adv^{\langle H(\cdot)\rangle}(\bA)
\end{aligned}
\right].
\]
\end{definition}

The above problems are hard if, for any PPT adversary $\adv$, the respective advantages are negligible.

The intuition is that the MLWE assumption protects against key-recovery attacks, the SelfTargetMSIS is the assumption upon which new message forgery is based, and the MSIS assumption is needed for strong unforgeability. The decisional MLWE will be used to prove withdrawability.

\section{Withdrawable signature schemes}
\subsection{The abstract construction}\label{sec:abstract}
In this section we provide a global construction for withdrawable signatures based on the Fiat--Shamir with aborts paradigm and prove the unforgeability and withdrawability as described in Definition \ref{def:ws}. This construction follows the paradigm described in \cite{lyubashevsky2009} combined with ideas in \cite{liu2023}. Let $\mathcal{C}=B_h$ be the (bounded) challenge set and $H:\{0,1\}^*\to\mathcal{C}$ a random oracle. Sampling uniformly from $\Ring$ is not well defined, and the challenge must be small enough that $\bz=\by+c\bs_1\in S^l$.

\begin{algorithm}
\caption{Fiat--Shamir with aborts signature}\label{alg:fswa}
\begin{algorithmic}[1]
\Procedure{KeyGen}{$1^\kappa$}
  \State $\bA\leftarrow\Rq^{k\times l},\ \bs_1\leftarrow S^l,\ \bs_2\leftarrow S^k$
  \State $\bt=\bA\bs_1+\bs_2$
  \State \textbf{return} $pk=(\bA,\bt),\ sk=(\bs_1,\bs_2)$
\EndProcedure
\Procedure{Sign}{$\mu,sk,\bt$}
  \Repeat\ $\by\leftarrow S^l,\ \bw=\mathsf{HB}(\bA\by),\ c=H(\mu,\bw),\ \bz=\by+c\bs_1$
  \Until{$\bz\in S^l$ \textbf{and} the recoverability bound holds}
  \State \textbf{return} $\sigma=(\bz,c)$
\EndProcedure
\Procedure{Verify}{$\mu,\sigma,pk$}
  \State $\bw'=\mathsf{HB}(\bA\bz-c\bt)$
  \If{$(\bz\in S^l)\wedge(c=H(\mu,\bw'))$} \textbf{return} $1$
  \EndIf
\EndProcedure
\end{algorithmic}
\end{algorithm}

\begin{theorem}[\cite{lyubashevsky2009}, Theorem 2]\label{thm:lyu}
Let $n$ be an integer which is a power of $2$. If the above signature scheme is not strongly unforgeable, then there is a polynomial-time algorithm that can solve $\SVP_\varepsilon(\Lambda)$, for $\varepsilon=\tilde{O}(n^2)$ and for every lattice $\Lambda$ corresponding to an ideal in the ring $\mathbb{Z}[x]/(x^n+1)$.
\end{theorem}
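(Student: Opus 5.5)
Since the statement is Lyubashevsky's theorem, the plan follows the structure of his proof in the ring setting ($k=l=1$ in the notation of Algorithm~\ref{alg:fswa}, with $\mathsf{HB}$ the identity and $S$ a box of width roughly $\tilde O(n)$ times the size of $c\bs_1$). The reduction runs in two stages. First, strong unforgeability is reduced to a collision problem for the ring SIS function $h_{\bA}(\by)=\bA\by$ restricted to short inputs. Second, that collision problem is reduced to worst-case $\SVP_{\varepsilon}$ on ideal lattices of $\mathbb{Z}[x]/(x^n+1)$, using the Lyubashevsky--Micciancio / Peikert--Rosen worst-case to average-case reduction for ideal lattices; we take this stage as a black box.

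\textbf{Step 1 (simulating signatures).} We show that the signing oracle can be simulated without $sk$ by programming the random oracle. The simulator samples $c\sample\mathcal{C}$ and $\bz$ uniformly from the accepted set $S^l$, and sets $H(\mu,\mathsf{HB}(\bA\bz-c\bt)):=c$. The rejection-sampling condition $\bz\in S^l$ ensures that the real distribution of $\bz$ is exactly uniform on $S^l$ and independent of $\bs_1$; this is the core calculation, and the acceptance probability is a constant $\approx 1/e$. Reprogramming fails only when the queried point was already set. That point carries min-entropy $\ge n\log|S|$, so the failure probability is negligible.

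\textbf{Step 2 (forking).} Given a forger $(\mu^*,\bz^*,c^*)$, there are two cases. If the forgery uses a programmed oracle point, it coincides with a simulated signature's commitment. Since $(\mu^*,\sigma^*)$ is new, this yields $\bA(\bz^*-\bz)=(c^*-c)\bt$, giving a short collision directly (or $\bz^*=\bz$, $c^*=c$, which contradicts strong novelty). Otherwise we apply the general forking lemma: rewind with fresh challenges at the critical query to obtain $\bA\bz-c\bt=\bA\bz'-c'\bt$ with $c\ne c'$. Substituting $\bt=\bA\bs_1$ (in the ring version the key is $\bs$ with $\bt=\bA\bs$) gives $\bA(\bz-\bz'-(c-c')\bs_1)=0$. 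The vector inside has $\ell_\infty$ norm $\tilde O(n)\cdot\|\bs_1\|_\infty$.

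\textbf{Step 3 (nonzero collision).} This is the main obstacle. We must show that $\bz-\bz'-(c-c')\bs_1\ne 0$ with non-negligible probability. The plan is a secret-key ambiguity argument. Parameters are chosen so that, with high probability over the key, there is a second secret $\bs_1'$ with $\bA\bs_1'=\bA\bs_1$. The adversary's view is independent of which secret was used, by Step 1, so the difference can vanish for at most one of the two secrets. The difficulty is making this precise given that $c-c'$ need not be invertible; we would follow Lyubashevsky's treatment of this point, exploiting the $x^n+1$ structure. Once the collision is established, its norm bound translates, via the ideal-lattice reduction, into the approximation factor $\varepsilon=\tilde O(n^2)$.
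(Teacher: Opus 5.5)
The paper gives no proof of this statement; it only imports Lyubashevsky's theorem. Your outline follows the structure of his argument: witness-free simulation by programming $H$, the programmed-point case, forking, non-triviality via a second secret key, and the Lyubashevsky--Micciancio collision-to-SVP reduction as a black box.

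\textbf{The gap is in your instantiation, and it breaks Step~3.} With $k=l=1$ and $\bt=\bA\bs_1$, the matrix $\bA$ is a single ring element $a$, and $h_{\bA}(y)=ay$ maps $\Rq$ to $\Rq$. This map is not compressing. The short key set has about $(2\eta+1)^n\ll q^n$ elements, so a short $\bs_1'\neq\bs_1$ with $a\bs_1'=a\bs_1$ typically does not exist, and no counting argument can supply one. Worse, whenever $a$ is a unit (which happens with high probability), $h_{\bA}$ is injective: there are no collisions to find, and anyone recovers $\bs_1=a^{-1}\bt$. The scheme is then trivially forgeable, so no reduction from ideal-lattice SVP can exist for it unless that problem is easy.

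Lyubashevsky's scheme takes $\bA\in\Rq^{1\times m}$ with $(2\eta+1)^{mn}\gg q^n$, e.g.\ $m=\Theta(\log n)$ for constant $\eta$. This compression is exactly what supplies the second preimage in Step~3. It is also the hypothesis $m>\log q/\log(2d)$ of the collision-to-SVP reduction you invoke, where $d$ is the $\ell_\infty$ bound of the hash domain.

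You should also state explicitly that you prove the theorem for this scheme ($\bs_2=0$, $\mathsf{HB}=\mathrm{id}$), not for Algorithm~\ref{alg:fswa} as written. With $\bs_2$ present and Dilithium-type short keys, $(\bs_1,\bs_2)$ is typically the unique short preimage of $\bt$. The statistical ambiguity argument then has nothing to work with; that setting is handled through MLWE and SelfTargetMSIS instead.

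\textbf{The point you defer at the end of Step~3 is not about invertibility of $c-c'$.} A trivial collision means $\bz-c\bs_1=\bz'-c'\bs_1$ as short vectors. Since all coefficients are far below $q/2$, this is an identity in $\mathbb{Z}[x]/(x^n+1)$, not just modulo $q$. Suppose two distinct keys $\bs_1\neq\bs_1'$ with $\bA\bs_1=\bA\bs_1'$ both made the collision trivial. Subtracting gives $(c-c')(\bs_1-\bs_1')=0$ in $\mathbb{Z}[x]/(x^n+1)$. This ring is an integral domain, because $x^n+1$ is irreducible over $\mathbb{Q}$ for $n$ a power of $2$. Since $c\neq c'$ and some component of $\bs_1-\bs_1'$ is nonzero, this is impossible.

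Hence, for every forking outcome, at most one key in the preimage set yields a trivial collision. By Step~1 the whole run depends only on $(\bA,\bt)$ and independent coins. The reduction's key is therefore uniform over a preimage set that has at least $2$ elements with high probability, independently of the outcome. So the collision is non-trivial with probability at least $1/2$.
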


Taking the above scheme as starting point, we define a withdrawable lattice-based scheme as follows:

\begin{algorithm}
\caption{$\KeyGen(pp)$}\label{alg:abs-keygen}
\begin{algorithmic}[1]
\State parse $\bA$ from $pp$
\State $\bs_1\leftarrow S^l,\ \bs_2\leftarrow S^k$
\State $\bt=\bA\bs_1+\bs_2$
\State $(ek,dk)\leftarrow\mathsf{KGen}_e(1^\kappa)$
\State \textbf{return} $pk=(\bA,\bt,ek),\ sk=(pk,\bs_1,\bs_2,dk)$
\end{algorithmic}
\end{algorithm}

\begin{algorithm}
\caption{$\WSign^{\mathsf{pub}}(\mu,sk_s,\pi)$}\label{alg:abs-wsign}
\begin{algorithmic}[1]
\State $\pi=\{pk_s,pk_v\}$
\Repeat \Comment{simulate verifier branch from the accepted distribution}
\State $c_v\sample\mathcal{C};\ \bz_v\sample S^l;\ w_v=\mathsf{HB}(\bA\bz_v-c_v\bt_v)$
\Until{the low-order recoverability test on $\bA\bz_v-c_v\bt_v$ holds}
\Repeat
  \State $\by\sample S^l;\ w_s=\mathsf{HB}(\bA\by)$
  \State $g=H_M(\mu,w_s,w_v,\pi);\ c_s=\iota\big((g-\iota^{-1}(c_v))\bmod M\big)$
  \State $\bz_s=\by+c_s\bs_1$ \Comment{$sk_s=(pk_s,\bs_1,\bs_2,dk_s)$}
\Until{$\bz_s\in S^l$ \textbf{and} the recoverability bound holds}
\State \textbf{return} $\sigma=(c_s,c_v,\bz_s,\bz_v)$
\end{algorithmic}
\end{algorithm}

\begin{algorithm}
\caption{$\WSVerify^{\mathsf{pub}}(\mu,\pi,\sigma)$}\label{alg:abs-wsverify}
\begin{algorithmic}[1]
\State $\sigma=(c_s,c_v,\bz_s,\bz_v)$
\State $w_s'=\mathsf{HB}(\bA\bz_s-c_s\bt_s);\ w_v'=\mathsf{HB}(\bA\bz_v-c_v\bt_v)$
\State $g=H_M(\mu,w_s',w_v',\pi)$
\If{$(\iota^{-1}(c_s)+\iota^{-1}(c_v)\equiv g \!\!\pmod{M})\wedge(\bz_s,\bz_v\in S^l)$} \textbf{return} $1$
\EndIf
\end{algorithmic}
\end{algorithm}

\begin{algorithm}
\caption{$\Confirm^{\mathsf{pub}}(\mu,sk_s,\pi,\sigma)$}\label{alg:abs-confirm}
\begin{algorithmic}[1]
\State \textbf{return} $\tilde{\sigma}\leftarrow\Sign_{sk_s}(\mu\,\|\,\pi\,\|\,\sigma)$
\end{algorithmic}
\end{algorithm}

\begin{algorithm}
\caption{$\CVerify^{\mathsf{pub}}(\mu,\pi,\sigma,\tilde{\sigma})$}\label{alg:abs-cverify}
\begin{algorithmic}[1]
\If{$\WSVerify^{\mathsf{pub}}(\mu,\pi,\sigma)=1\ \wedge\ \Verify_{pk_s}(\mu\,\|\,\pi\,\|\,\sigma,\ \tilde{\sigma})=1$} \textbf{return} $1$
\EndIf
\end{algorithmic}
\end{algorithm}

The proofs for Theorem \ref{thm:unforge} and Theorem \ref{thm:withdraw} below closely follow the structure of their analogous Theorems 12 and 13 in \cite{liu2023}. The fundamental arguments remain valid when adapting the security assumptions to our context, with the primary distinction being the underlying signature scheme.

\begin{assumption}[Commitment min-entropy]\label{ass:entropy}
The commitment pair $(w_s,w_v)$ produced by an accepted run of $\WSign$ has min-entropy at least $\alpha$; for the concrete Dilithium parameter sets $\alpha\ge 1024$ (and $\alpha\ge 1397$ at level~2), so $2^{-\alpha}\le 2^{-1024}$.
\end{assumption}

\begin{remark}[Heuristic support for Assumption~\ref{ass:entropy}]
Let $w_s[1]=\mathbf{a}_1\!\cdot\!\by$ be one output coordinate ($\mathbf{a}_1\in\Rq^{\,l}$ the first row of $\bA$). The input min-entropy $H_\infty(\by)=l\,n\log_2(2\gamma_1)$ exceeds $\log_2|\Rq|=n\log_2 q$ by $\ge 12544$ (level~2), $19712$ (level~3), $29952$ (level~5) bits, so under the module regularity of the SIS map $(\mathbf{a}_1,\mathbf{a}_1\!\cdot\!\by)$ is statistically close to $(\mathbf{a}_1,\mathcal U(\Rq))$, and $\HighBits(w_s[1],2\gamma_2)$ has min-entropy $\approx n\log_2((q-1)/2\gamma_2)\ge 1024$. Conditioning on the constant-probability rejection event lowers this by $O(1)$ bits. A fully rigorous bound needs a regularity lemma over $\Rq$ with exact conditions for a completely split $q$ and the rejection-conditioned law of $\by$; we therefore record the value as Assumption~\ref{ass:entropy}.
\end{remark}
 
\begin{lemma}[Witness-free simulatability / naHVZK]\label{lem:zk}
There is a PPT algorithm $\mathsf{Sim}(\mu,\pi)$, taking no secret key and fixing $H_M$ at a single point, such that for all honestly generated $\pi=\{pk_s,pk_v\}$ and all $\mu$, the output of $\mathsf{Sim}(\mu,\pi)$ together with its induced $H_M$-entry is within statistical distance $\varepsilon_{\mathrm{zk}}$ of a genuine pair $(\WSign(\mu,sk_s,\pi),H_M\text{-entry})$, conditioned on the programmed point being fresh. Concretely $\mathsf{Sim}$ repeatedly samples $c_s,c_v\sample\mathcal{C}$ and $\bz_s,\bz_v\sample S_{\gamma_1-\beta-1}^l$ until both $\norminf{\LowBits(\bA\bz_s-c_s\bt_s,2\gamma_2)}<\gamma_2-\beta$ and $\norminf{\LowBits(\bA\bz_v-c_v\bt_v,2\gamma_2)}<\gamma_2-\beta$ (the same low-order test as $\WSign$); it then sets $w_s=\HighBits(\bA\bz_s-c_s\bt_s,2\gamma_2)$, $w_v=\HighBits(\bA\bz_v-c_v\bt_v,2\gamma_2)$, programs $H_M(\mu,w_s,w_v,\pi):=(\iota^{-1}(c_s)+\iota^{-1}(c_v))\bmod M$, and returns $(c_s,c_v,\bz_s,\bz_v)$.
\end{lemma}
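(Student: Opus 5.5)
We compare the joint law of $(c_s,c_v,\bz_s,\bz_v)$ and the programmed $H_M$-entry under genuine $\WSign$ with the law under $\mathsf{Sim}$. We work in the no-hint, full-$\bt$ setting, with $S=S_{\gamma_1-\beta-1}$, the masking range $S_{\gamma_1-1}$, and the recoverability test $\norminf{\LowBits(\bA\bz-c\bt,2\gamma_2)}<\gamma_2-\beta$. The target is $\varepsilon_{\mathrm{zk}}=0$ in this setting. The argument is the standard Lyubashevsky rejection-sampling argument, applied once per branch and then glued together through the modular relation $\iota^{-1}(c_s)+\iota^{-1}(c_v)\equiv g\pmod M$.

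\textbf{Step 1 (signer branch).} For any fixed $c_s\in\mathcal{C}$ we have $\norminf{c_s\bs_1}\le\beta$. The map $\by\mapsto\by+c_s\bs_1$ is therefore a bijection from $S_{\gamma_1-1}^l$ onto a box containing $S_{\gamma_1-\beta-1}^l$. Conditioned on $\bz_s\in S_{\gamma_1-\beta-1}^l$, the value $\bz_s$ is exactly uniform on that box and independent of $\bs_1$.

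\textbf{Step 2 (challenge).} In the genuine run, $g$ is a fresh uniform element of $\mathbb{Z}_M$, since the programmed point is assumed fresh. Because $\iota$ is a bijection and $c_v$ is fixed before $g$ is drawn, $c_s=\iota((g-\iota^{-1}(c_v))\bmod M)$ is uniform on $\mathcal{C}$ and independent of $c_v$.

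\textbf{Step 3 (low-order test).} The acceptance test in $\WSign$ on $\bA\by-\bs_2 c_s$ must be rewritten as a test on $\bA\bz_s-c_s\bt_s$. Since $\bA\bz_s-c_s\bt_s=\bA\by-c_s\bs_2$ and $\norminf{c_s\bs_2}\le\beta$, the Dilithium argument applies: $\norminf{\LowBits(\bA\by-c_s\bs_2,2\gamma_2)}<\gamma_2-\beta$ together with $\HighBits$ recoverability is equivalent to the simulator's test, and it forces $w_s=\HighBits(\bA\bz_s-c_s\bt_s)$. The accepted genuine triple $(c_s,\bz_s,w_s)$ is thus distributed as a uniform $(c_s,\bz_s)$ conditioned on a predicate that depends only on public data. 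That is exactly what $\mathsf{Sim}$ produces.

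\textbf{Step 4 (verifier branch and gluing).} The verifier branch of $\WSign$ is already produced by the same rejection procedure as $\mathsf{Sim}$, provided both use the same box $S^l$. The two branches are independent given the public key. The induced $H_M$-entry is determined in both worlds by $(\iota^{-1}(c_s)+\iota^{-1}(c_v))\bmod M$ at the point $(\mu,w_s,w_v,\pi)$. Hence the full joint laws coincide, and the statistical distance is $0$ conditioned on freshness. The freshness failure probability, bounded using Assumption~\ref{ass:entropy}, is not part of $\varepsilon_{\mathrm{zk}}$; it is charged later.

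\textbf{Main obstacle.} The delicate point is Step 3. Genuine signing loops until acceptance and resamples $g$ whenever $w_s$ changes. We must show that conditioning on acceptance does not introduce dependence between $c_s$ and $\bz_s$ through $\bs_2$. We therefore check that the accept set is characterized by a predicate on $(\bA\bz_s-c_s\bt_s)$ alone and does not depend on the secret. This is exactly where hints break perfectness in production ML-DSA and give a nonzero $\varepsilon_{\mathrm{zk}}$. We would handle this by computing the per-iteration output probability explicitly and showing that it is proportional to $\mathbf{1}[\text{test}(c_s,\bz_s)]$, uniformly over all valid $(c_s,\bz_s)$.
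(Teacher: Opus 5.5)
Your proposal is correct and follows essentially the paper's route. As in the paper, the verifier branch is identical by construction, the signer branch is an accepted FS-with-aborts transcript, and the two are glued through $g\equiv\iota^{-1}(c_s)+\iota^{-1}(c_v)\pmod M$ with $c_s$ uniform and independent of $c_v$. The one difference is that you carry out the rejection-sampling computation yourself: the unique preimage $\by=\bz_s-c_s\bs_1$ and the identity $\bA\bz_s-c_s\bt_s=\bA\by-c_s\bs_2$, which makes the low-order test public. The paper instead cites the base scheme's naHVZK for this step and proves it separately in Lemma~\ref{lem:perfectzk}, so you get $\varepsilon_{\mathrm{zk}}=0$ for the no-hint instantiation directly.
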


\begin{proof}
The verifier branch is produced by exactly this simulation already in $\WSign$, so it is identical in both distributions. The signer branch is one FS-with-aborts transcript; by the non-abort honest-verifier zero-knowledge of the base scheme its accepted output
$(c_s,\bz_s)$ is within statistical distance $\varepsilon_{\mathrm{zk}}$ of $(c_s\sample\mathcal{C},\ \bz_s\ \text{short},\ w_s=\HighBits(\bA\bz_s-c_s\bt_s,2\gamma_2))$. Finally the law of $(c_s,c_v,g)$ agrees: in the real scheme $c_v$ and $g=H_M(\cdots)$ are uniform and independent and $c_s=\iota((g-\iota^{-1}(c_v))\bmod M)$; in $\mathsf{Sim}$ both $c_s,c_v$ are uniform and $g$ is set to $\iota^{-1}(c_s)+\iota^{-1}(c_v)$. Both yield $(c_s,c_v)\sim\mathcal{U}(\mathcal{C})^2$ with $g$ determined by the constraint.
\end{proof}

\begin{theorem}\label{thm:unforge}
If the underlying signature scheme $\Sign$ is EUF-CMA secure, then the publicly-verifiable core of Algorithms \ref{alg:abs-keygen}--\ref{alg:abs-cverify} is unforgeable under insider corruption in the (Q)ROM (the full designated scheme of Section \ref{sec:sdvs} inherits this via Proposition \ref{prop:sdvs}). For any adversary $\adv$ making $q_W$ signing, $q_C$ confirmation and $q_{H_M}$ random-oracle queries, there is an adversary $\mathcal{B}$ such that, in the random-oracle model,
\[
\Adv^{\eufic}_{WS}(\adv)\ \le\ \Adv^{\eufcma}_{\Sign}(\mathcal{B})\ +\ q_W\,\varepsilon_{\mathrm{zk}}\ +\ \frac{q_W(q_{H_M}+q_W)}{2^{\alpha}},
\]
where $\varepsilon_{\mathrm{zk}}$ is the per-proof simulation error of Lemma~\ref{lem:zk} and $\alpha$ the commitment min-entropy (Assumption~\ref{ass:entropy}). The last term is the random-oracle programming loss; in the quantum random-oracle model it is replaced by the adaptive-reprogramming term of Remark~\ref{rem:qrom}. The reduction incurs no multiplicative $q_H$ factor.
\end{theorem}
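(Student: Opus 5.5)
Our approach is a direct reduction: we build $\mathcal{B}$ against the EUF-CMA security of $\Sign$ (Definition~\ref{def:ds-euf}) and run it through a short sequence of games, one per additive term. $\mathcal{B}$ receives the challenge key $pk^\ast$ of $\Sign$ and samples $pp$ so that the matrix $\bA$ is the one in $pk^\ast$. It guesses no index. Every key other than the signer's is generated honestly, together with its $(ek,dk)$ and $(\bs_1,\bs_2)$, so $\Ora{Corrupt}_i$ for $i\neq s$ is answered perfectly; in particular $sk_v$ can be handed out. The signer's key is set to $pk_s:=pk^\ast$ and the signer is never corruptible. $\mathcal{B}$ simulates $H_M$ by lazy sampling. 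Confirmation queries $(\mu,\sigma)$ with $\sigma\in W$ are forwarded to $\mathcal{B}$'s own signing oracle on the string $\mu\|\pi\|\sigma$, and the result is returned as $\tilde\sigma$. Since $\mathcal{B}$ does not know $\bs_1$ for $pk_s$, the only nontrivial part is answering $\Ora{WSign}$.

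\textbf{Game hops.} Game~0 is the real experiment. In Game~1, each $\Ora{WSign}(\mu)$ query is answered by $\mathsf{Sim}(\mu,\pi)$ from Lemma~\ref{lem:zk}, which programs $H_M(\mu,w_s,w_v,\pi)$. If that point was already defined, by an earlier adversarial query or an earlier programming, the game aborts. Two facts bound this programming failure:
\begin{enumerate}
\item By Assumption~\ref{ass:entropy}, the pair $(w_s,w_v)$ has min-entropy at least $\alpha$.
\item The table contains at most $q_{H_M}+q_W$ entries when each of the $q_W$ programmings happens.
\end{enumerate}
A union bound gives collision probability $\le q_W(q_{H_M}+q_W)/2^\alpha$. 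Conditioned on freshness, Lemma~\ref{lem:zk} makes each simulated pair (signature, oracle entry) $\varepsilon_{\mathrm{zk}}$-close to the genuine one. A hybrid over the $q_W$ queries then gives $|\Pr[G_0]-\Pr[G_1]|\le q_W\varepsilon_{\mathrm{zk}}+q_W(q_{H_M}+q_W)/2^\alpha$. Game~1 uses no secret of the signer's lattice key beyond the $\Sign$ oracle, so $\mathcal{B}$ runs it exactly. Note that the reduction never rewinds or forks; it only reprograms. This is why no multiplicative $q_H$ loss appears, unlike a SelfTargetMSIS extraction.

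\textbf{Extracting the forgery.} Suppose $\adv$ wins Game~1 with $(\mu^\ast,\sigma^\ast,\tilde\sigma^\ast)$ and $\mu^\ast\notin M$. Then $\CVerify^{\mathsf{pub}}$ accepts, so $\Verify_{pk_s}(\mu^\ast\|\pi\|\sigma^\ast,\tilde\sigma^\ast)=1$. $\mathcal{B}$ outputs $(\mu^\ast\|\pi\|\sigma^\ast,\tilde\sigma^\ast)$. We must argue that this string was never queried to $\mathcal{B}$'s signing oracle. Every such query had the form $\mu\|\pi\|\sigma$ with $\mu\in M$. We rely on an unambiguous (length-prefixed) encoding of the concatenation, which we make explicit. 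Under that encoding, $\mu^\ast\|\pi\|\sigma^\ast=\mu\|\pi\|\sigma$ forces $\mu^\ast=\mu\in M$, a contradiction. Hence $\Pr[G_1]\le\Adv^{\eufcma}_{\Sign}(\mathcal{B})$, and summing the hops yields the stated bound.

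\textbf{Main obstacle and QROM.} The delicate step is the programming hop. Lemma~\ref{lem:zk} is stated only conditioned on the programmed point being fresh. The point also includes $\mu$ and $\pi$, which the adversary controls, so the only entropy available is the commitment part. We would therefore state the event explicitly as ``the pair $(w_s,w_v)$ output by $\mathsf{Sim}$ hits a previously queried value'' and bound it directly by Assumption~\ref{ass:entropy}, rather than appealing to the simulator's internal randomness. In the QROM, the union bound over classical queries is unavailable. We would replace that hop by the adaptive-reprogramming lemma of Remark~\ref{rem:qrom}, which bounds the distinguishing advantage of reprogramming a point of min-entropy $\alpha$ against a $q_{H_M}$-query quantum adversary. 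The rest of the argument is oblivious to the oracle model, since the final reduction step is purely classical and uses the $\Sign$ oracle as a black box.
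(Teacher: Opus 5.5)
Your proposal is correct and follows the paper's proof essentially step for step. It uses the same single hop, replacing $\Ora{WSign}$ by the witness-free simulator of Lemma~\ref{lem:zk} at cost $q_W\varepsilon_{\mathrm{zk}}+q_W(q_{H_M}+q_W)/2^{\alpha}$, and then the same non-rewinding reduction, which forwards confirmation queries on $\mu\|\pi\|\sigma$ to the $\Sign$ oracle and uses injectivity of the encoding to show the forged message is fresh. The one detail the paper makes explicit and you omit is that $pk_s$ in this scheme also carries an encryption key, so $\mathcal{B}$ should generate $(ek_s,dk_s)$ itself and set $pk_s:=(pk^\ast,ek_s)$ rather than $pk_s:=pk^\ast$; conversely, your remark that $pp$ must carry the matrix $\bA$ of $pk^\ast$ is a point the paper leaves implicit.
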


\begin{remark}
Algorithms~\ref{alg:abs-keygen}--\ref{alg:abs-cverify} form the \emph{publicly-verifiable core}: Theorem~\ref{thm:withdraw} establishes its origin ambiguity and Theorem~\ref{thm:unforge} the unforgeability of its confirmation. The designated withdrawable signature of Definition~\ref{def:ws} is the strong designated-verifier scheme of Section~\ref{sec:sdvs}, which wraps this core and inherits both properties (Proposition~\ref{prop:sdvs}).    
\end{remark}

\begin{proof}
We use a short sequence of games; $\Pr[\mathsf{G}_i]$ is the probability that game $\mathsf{G}_i$ outputs $1$. Throughout, $H_M$ (the OR-proof challenge oracle) is domain-separated from the random oracle of the underlying scheme $DS$ of Definition \ref{def:ds-euf}; recall $\Confirm(\mu,sk_s,\pi,\sigma)=\Sign_{sk_s}(\mu\|\pi\|\sigma)$ and $\CVerify(\mu,\pi,\sigma,\tilde\sigma)=1$ iff $\WSVerify^{\mathsf{pub}}(\mu,\pi,\sigma)=1$ and $\Verify_{pk_s}(\mu\|\pi\|\sigma,\tilde\sigma)=1$, with $\langle\cdot\rangle$ an injective encoding.
 
\medskip\noindent\textbf{Game $\mathsf{G}_0$.} The experiment $\Exp^{\eufic}_{WS,\adv}(1^\kappa)$ of Definition \ref{def:euf} (Algorithm \ref{alg:euf-exp}). Thus $\Pr[\mathsf{G}_0]=\Adv^{\eufic}_{WS}(\adv)$.
 
\medskip\noindent\textbf{Game $\mathsf{G}_1$.} As $\mathsf{G}_0$, but $\Ora{WSign}$ is answered by $\mathsf{Sim}(\mu,\pi)$ of Lemma \ref{lem:zk}, programming the one induced $H_M$-entry; if that entry is already defined the game raises $\mathsf{bad}$ and aborts. By a hybrid over the $q_W$ signing queries, each replacement changes $\adv$'s view by at most $\varepsilon_{\mathrm{zk}}$ (Lemma \ref{lem:zk}) unless the programmed point $(\mu,w_s,w_v,\pi)$ already occurs in the $H_M$-table; at that moment the table has at most $q_{H_M}+q_W$ entries and, by Definition \ref{ass:entropy}, $(w_s,w_v)$ is unpredictable with min-entropy $\ge\alpha$, so a collision occurs with probability at most $(q_{H_M}+q_W)/2^{\alpha}$. Hence
\[
\big|\Pr[\mathsf{G}_0]-\Pr[\mathsf{G}_1]\big|\le q_W\,\varepsilon_{\mathrm{zk}} +\frac{q_W(q_{H_M}+q_W)}{2^{\alpha}}.
\]
 
In $\mathsf{G}_1$ the signer key $sk_s$ is used \emph{only} inside $\Ora{Confirm}$, via $\Sign_{sk_s}(\langle\mu,\pi,\sigma\rangle)$: $\Ora{Corrupt}$ never returns $sk_s$; $\Ora{WSign}$ now runs $\mathsf{Sim}$ (no secret key); the oracles $H,H_M$ are key-independent; and the keys of parties $i\neq s$ are independent of $sk_s$.
 
We build $\mathcal{B}$ against the $\eufcma$-security of $DS$ (Definition \ref{def:ds-euf}). $\mathcal{B}$ gets a challenge key $pk^\ast$, a signing oracle $\Sign_{sk^\ast}(\cdot)$ and access to $DS$'s random oracle $H$. It generates a fresh encryption key pair $(ek_s,dk_s)\leftarrow\mathsf{KGen}_e(1^\kappa)$ itself and sets $pk_s:=(pk^\ast,ek_s)$, so that $\Verify_{pk_s}$ parses and uses only the signing component $pk^\ast$ while $ek_s$ enters $\pi$ exactly as in an honest key; it never uses $dk_s$. It runs $(pk_i,sk_i)\leftarrow\KeyGen(pp)$ for all $i\neq s$ (so it holds $sk_v$), sets $\pi:=\{pk_s,pk_v\}$, and runs $\adv$, answering: $\Ora{Corrupt}(i)$ by $\bot$ if $i=s$ and by $sk_i$ otherwise (in particular it can deliver $sk_v$); $H_M$ by lazy sampling consistent with programming; $H$ by relaying to its own oracle (it never programs $H$); $\Ora{WSign}(\mu)$ by $\mathsf{Sim}(\mu,\pi)$ (aborting on $\mathsf{bad}$); and $\Ora{Confirm}(\mu,\sigma)$, for $(\mu,\sigma)\in W$, by setting $M\leftarrow M\cup\{\mu\}$ and returning $\tilde\sigma\leftarrow\Sign_{sk^\ast}(\langle\mu,\pi,\sigma\rangle)$. This is a perfect emulation of $\mathsf{G}_1$. When $\adv$ outputs $(\mu^\ast,\sigma^\ast,\tilde\sigma^\ast)$, $\mathcal{B}$ outputs $(\langle\mu^\ast,\pi,\sigma^\ast\rangle,\tilde\sigma^\ast)$.
 
If $\adv$ wins $\mathsf{G}_1$ then $\CVerify(\mu^\ast,\pi,\sigma^\ast,\tilde\sigma^\ast)=1$, hence $\Verify_{pk^\ast}(\langle\mu^\ast,\pi,\sigma^\ast\rangle,\tilde\sigma^\ast)=1$. The only messages $\mathcal{B}$ sent to its signing oracle are $\langle\mu_j,\pi,\sigma_j\rangle$ for the confirmation queries $j$, each of which placed $\mu_j$ into $M$. Since $\mu^\ast\notin M$, we have $\mu^\ast\neq\mu_j$ for all $j$, so by injectivity of $\langle\cdot\rangle$ the forgery message $\langle\mu^\ast,\pi,\sigma^\ast\rangle$ is new. Thus $\mathcal{B}$ wins, giving $\Pr[\mathsf{G}_1]\le\Adv^{\eufcma}_{DS}(\mathcal{B})$; $\mathcal{B}$ makes at most $q_C$ signing queries. Combining,
\[
\Adv^{\eufic}_{WS}(\adv)\le \Adv^{\eufcma}_{DS}(\mathcal{B}) +q_W\,\varepsilon_{\mathrm{zk}}+\frac{q_W(q_{H_M}+q_W)}{2^{\alpha}}.
\]
The reduction never rewinds $\adv$ and never extracts from the OR-proof (only its simulatability is used), so there is no multiplicative $q_H$ loss. Insider corruption is captured: $\mathcal{B}$ holds $sk_v$ and may hand it to $\adv$, yet confirmation still needs $sk_s=sk^\ast$, which $\mathcal{B}$ never knows.
\end{proof}

\begin{remark}[QROM]\label{rem:qrom}
In the QROM the only changes are: (i) the simulation of $\Ora{WSign}$ programs $H_M$ at an adaptively chosen, high-min-entropy point under superposition queries, which is handled by the adaptive reprogramming lemma of Grilo--H\"ovelmanns--H\"ulsing--Majenz \cite{grilo2021}, turning the term $q_W(q_{H_M}+q_W)/2^{\alpha}$ into $\tfrac{3}{2}q_W\sqrt{(q_{H_M}+q_W+1)/2^{\alpha}}$; and (ii) Definition \ref{def:ds-euf} is taken in the QROM, which for Dilithium follows from $\mathrm{SelfTargetMSIS}$ and $\mathrm{MSIS}$ \cite{ducas2017, kiltz2018}. The reduction $\mathcal{B}$ is unchanged, as it relays $H$ to the (quantum) $DS$ challenger and never programs it.
\end{remark}

\begin{theorem}\label{thm:withdraw}
The scheme of Algorithms \ref{alg:abs-keygen}--\ref{alg:abs-cverify} is withdrawable. For any computationally unbounded adversary $\adv$ making at most $q_W$ signing and $q_{H_M}$ random-oracle queries,
\[
\Adv^{\mathrm{Withdraw}}_{WS}(\adv):=\Pr\!\big[\Exp^{\mathrm{Withdraw}}_{WS,\adv}(1^\kappa)=1\big]\ \le\ \tfrac12+(q_W{+}1)\,\varepsilon_{\mathrm{zk}}+\frac{(q_W{+}1)(q_{H_M}{+}q_W{+}1)}{2^{\alpha}},
\]
where $q_W$ is the number of signing-oracle queries, $q_{H_M}$ the number of queries to $H_M$, and $\varepsilon_{\mathrm{zk}},\alpha$ are as in Lemma \ref{lem:zk} and Definition \ref{ass:entropy}. The bound is statistical, so it holds against computationally unbounded adversaries making at most $q_{H_M}$ random-oracle queries (assuming both key pairs are generated honestly by the experiment, as in Definition \ref{def:withdraw}). If instead $pk_v$ may be chosen by $\adv$, the bound gains an additive $\Adv^{\mathrm{MLWE}}$ term (Definition \ref{def:mlwe}), see Remark \ref{rem:wd}.
\end{theorem}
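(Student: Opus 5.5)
The argument is a short hybrid: replace every signature the adversary sees, including the challenge $\sigma_b$, by the output of the witness-free simulator $\mathsf{Sim}(\mu,\pi)$ of Lemma~\ref{lem:zk}. The simulator's output does not depend on $b$, so in the final game the adversary's view is independent of $b$. Since all steps are statistical, the bound holds against unbounded adversaries limited only in their number of $H_M$ queries.

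The key observation is that $\mathsf{Sim}$ treats the two keys symmetrically. It samples $(c_s,\bz_s)$ and $(c_v,\bz_v)$ by the same procedure, and the programmed value $(\iota^{-1}(c_s)+\iota^{-1}(c_v))\bmod M$ is symmetric in the two branches. So its output, together with its induced $H_M$-entry, has the same law whichever of $pk_0,pk_1$ plays the role of signer. Both keys are honestly generated, so Lemma~\ref{lem:zk} applies with either key as the signing key. In particular, $\WSign(\mu,sk_0,\pi)$ and $\WSign(\mu,sk_1,\pi)$ are each $\varepsilon_{\mathrm{zk}}$-close to the same simulated distribution, conditioned on the programmed point being fresh. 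One detail needs care: the unordered set $\pi$ must be encoded canonically (say, ordered by public key), so that the tuple returned and the hash input do not reveal which party "signed".

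The games are as follows. $\mathsf{G}_0$ is $\Exp^{\mathrm{Withdraw}}_{WS,\adv}$. For $j=1,\dots,q_W+1$, game $\mathsf{G}_j$ answers the first $j$ signature requests (the $q_W$ oracle calls plus the challenge, in chronological order) with $\mathsf{Sim}$, programming $H_M$ lazily and raising $\mathsf{bad}$ if the programmed point $(\mu,w_s,w_v,\pi)$ is already in the table. Each step costs at most $\varepsilon_{\mathrm{zk}}$ from Lemma~\ref{lem:zk}, plus the collision probability. At that moment the table holds at most $q_{H_M}+q_W+1$ entries, and by Assumption~\ref{ass:entropy} the commitment pair has min-entropy at least $\alpha$, so a collision has probability at most $(q_{H_M}+q_W+1)/2^{\alpha}$. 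Summing over the $q_W+1$ steps gives the stated additive terms. In $\mathsf{G}_{q_W+1}$ neither $sk_0$ nor $sk_1$ is used, and the joint distribution of $(pp,\pi,\text{all signatures},H_M)$ is independent of $b$. Hence $\Pr[\mathsf{G}_{q_W+1}]\le\tfrac12$, and the condition $\mu^*\notin M$ can only lower the success probability.

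The main obstacle is making the hybrid step rigorous when $H_M$ is shared across queries. Lemma~\ref{lem:zk} is stated per proof, conditioned on freshness, so I must check that earlier programmings and the adversary's adaptive $H_M$ queries do not correlate with the fresh randomness $\by$ of the current query. Here I would argue that, conditioned on no $\mathsf{bad}$ event, the real $H_M$ value at a fresh point is uniform and independent of the view, exactly as the lemma assumes. The other delicate point is Assumption~\ref{ass:entropy}: it must cover the simulated distribution as well as the real one. In the final hybrids the pair $(w_s,w_v)$ is produced entirely by $\mathsf{Sim}$, so I would either invoke the lemma's closeness (paying another $\varepsilon_{\mathrm{zk}}$, which is absorbed) or note that the verifier branch is identical in both distributions and already supplies enough min-entropy.
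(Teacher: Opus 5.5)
Your proposal is correct and follows the paper's proof essentially step for step. Like you, the paper replaces all $N=q_W+1$ signatures (the challenge plus the oracle replies) by the witness-free $\mathsf{Sim}$, charges $\varepsilon_{\mathrm{zk}}+(q_{H_M}+N)/2^{\alpha}$ per hybrid step, and uses the fact that $\mathsf{Sim}$ depends only on the set $\pi$, so the final game is independent of $b$. The only difference is bookkeeping: the paper bounds $|p_b-p_{\mathsf S}|\le\delta$ for each fixed $b$ and concludes via $\tfrac12+\tfrac12(p_1-p_0)$ and the triangle inequality, whereas you hybridise the winning probability directly. Your remarks on encoding $\pi$ canonically and on applying Assumption~\ref{ass:entropy} to simulated commitments make explicit points the paper leaves implicit. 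One caveat: paying an extra $\varepsilon_{\mathrm{zk}}$ per step for the latter would double the $\varepsilon_{\mathrm{zk}}$ coefficient rather than be absorbed, though this is harmless when $\varepsilon_{\mathrm{zk}}=0$.
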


\begin{proof}
We bound $\Pr[\Exp^{\mathrm{Withdraw}}_{WS,\adv}(1^\kappa)=1]$ of Definition \ref{def:withdraw} (Algorithm \ref{alg:withdraw-exp}). The experiment generates both key pairs honestly, sets $\pi=\{pk_0,pk_1\}$, draws $b\sample\{0,1\}$, and gives $\adv$ the challenge $\sigma_b=\WSign(\mu^*,sk_b,\pi)$ together with the signing oracle $\Ora{WSign}_{sk_b,\pi}$ of Algorithm \ref{alg:withdraw-oracle}; finally $\adv$ outputs a guess $b'$ and wins iff $b'=b$. Let $\mathsf{Sim}(\mu,\pi)$ be the witness-free simulator of Lemma \ref{lem:zk}; crucially $\mathsf{Sim}$ depends only on the set $\pi$ (it simulates \emph{both} branches), not on which party is the prover.

The whole view of $\adv$ consists of $N:=q_W+1$ withdrawable signatures, all produced under the witness $sk_b$ (the challenge and the $q_W$ oracle replies), plus $\adv$'s $q_{H_M}$ queries to $H_M$.

\smallskip\noindent\textbf{Game $\mathsf{S}$ (witness-free).} Modify the experiment so that each of these $N$ signatures is produced by $\mathsf{Sim}(\cdot,\pi)$, programming the single induced $H_M$-entry (and aborting on a clash). Since $\mathsf{Sim}$ uses neither
$sk_0$ nor $sk_1$, the entire experiment $\mathsf{S}$ -- and hence the joint distribution of $\adv$'s view and output $b'$ -- is independent of $b$. Writing $p_{\mathsf S}:=\Pr[b'=1\mid\mathsf S]$, independence gives $\Pr[b'=b\mid\mathsf S]=\tfrac12\big((1-p_{\mathsf S})+p_{\mathsf S}\big)=\tfrac12$.

For $b\in\{0,1\}$ let $G_b$ be the real experiment conditioned on that bit, and $p_b:=\Pr[b'=1\mid G_b]$. We pass from $G_b$ to $\mathsf{S}$ by replacing the $N$ signatures one at a time. Each step turns one $\WSign(\cdot,sk_b,\pi)$ into one $\mathsf{Sim}(\cdot,\pi)$; by Lemma \ref{lem:zk}, applied with party $b$ as the real prover (valid because $pk_b$ is honest, so $sk_b$ is a genuine witness), the two are within statistical distance $\varepsilon_{\mathrm{zk}}$, provided the programmed point $(\mu,w_s,w_v,\pi)$ is fresh. By Definition \ref{ass:entropy} the commitment pair has min-entropy $\ge\alpha$, so at any step the freshness fails with probability at most $(q_{H_M}+N)/2^{\alpha}$ (the $H_M$-table holds at most $q_{H_M}+N$ points). Summing over the $N$ hybrids,
\[
\big|p_b-p_{\mathsf S}\big|\ \le\ \delta:=N\varepsilon_{\mathrm{zk}} +\frac{N(q_{H_M}+N)}{2^{\alpha}},\qquad b\in\{0,1\}.
\]

Using $\Pr[b'=0\mid G_0]=1-p_0$,
\[
\Pr[\Exp^{\mathrm{Withdraw}}_{WS,\adv}=1] =\tfrac12\big(\Pr[b'=0\mid G_0]+\Pr[b'=1\mid G_1]\big) =\tfrac12+\tfrac12\,(p_1-p_0).
\]
By the triangle inequality $|p_1-p_0|\le|p_1-p_{\mathsf S}|+|p_{\mathsf S}-p_0|\le 2\delta$, hence
\[
\Pr[\Exp^{\mathrm{Withdraw}}_{WS,\adv}=1]\ \le\ \tfrac12+\delta =\tfrac12+(q_W+1)\varepsilon_{\mathrm{zk}}+\frac{(q_W+1)(q_{H_M}+q_W+1)}{2^{\alpha}} .
\]
Every step is information-theoretic, so the bound holds against computationally unbounded adversaries that make at most $q_{H_M}$ random-oracle queries.
\end{proof}

\begin{remark}[Theorem~\ref{thm:withdraw} in the QROM]\label{rem:qrom-wd}
Theorem~\ref{thm:withdraw} is stated in the classical ROM. Its only use of the random oracle is the witness-free simulation of $\WSign$, which programs $H_M$ at the adaptively chosen, high- min-entropy point $(\mu,w_s,w_v,\pi)$ (Assumption~\ref{ass:entropy}). In the QROM the adaptive-reprogramming lemma of Grilo--H\"ovelmanns--H\"ulsing--Majenz \cite{grilo2021} applies exactly as in Remark~\ref{rem:qrom}, replacing the classical freshness term $\tfrac{(q_W{+}1)(q_{H_M}{+}q_W{+}1)}{2^{\alpha}}$ by $\tfrac{3}{2}(q_W{+}1)\sqrt{(q_{H_M}{+}q_W{+}1)/2^{\alpha}}$; the statistical, information-theoretic core of the argument is unchanged. For the no-hint instantiation the two real signing distributions are identical (Lemma~\ref{lem:perfectzk}), so the $dk_v$/$sk_v$ bound of Proposition~\ref{prop:sdvs} is exact in both models and incurs no reprogramming term.
\end{remark}

\begin{remark}[Adversarial $pk_v$; strong-DV variant]\label{rem:wd} Definition \ref{def:withdraw} generates both keys honestly, so the proof is statistical. In the stronger model where $\adv$ chooses $pk_v$, $\mathsf{Sim}$ still simulates for any $pk_v$, but the two real provers require both statements to admit short witnesses; bridging an honestly generated $pk_v$ to an adversarial one costs one decisional-MLWE step (Definition \ref{def:mlwe}), adding $\Adv^{\mathrm{MLWE}}$ to the bound. For the strong designated-verifier scheme of Section \ref{sec:sdvs}, the same hybrid first replaces every signature by a simulated one and then replaces the two ciphertexts and the one-time pad by uniform strings; this yields $\Pr[\Exp^{\mathrm{Withdraw}}=1]\le\tfrac12+(q_W{+}1)\varepsilon_{\mathrm{zk}} +q_W\big(2\,\Adv^{\mathrm{pr}}_{\PKE}+\Adv^{\mathrm{prg}}_{G}\big)+\tfrac{(q_W{+}1)(q_{H_M}{+}q_W{+}1)}{2^{\alpha}}$, i.e.\ withdrawability against third parties as in Proposition \ref{prop:sdvs}, now made quantitative; non-transferability against the designated verifier is exactly the statistical statement proved above.
\end{remark}

\subsection{An instantiation}\label{subsec:instantiation}
The general construction of Section \ref{sec:abstract} for a withdrawable digital signature scheme whose underlying scheme is built using the Fiat--Shamir with aborts paradigm was proven to be unforgeable under insider corruption and withdrawable. As a direct application of this proposal, in this section we present a withdrawable signature scheme based on a no-hint, full-$\bt$ Dilithium-style Fiat--Shamir-with-aborts signature~\cite{ducas2017}; adapting to production ML-DSA (with hints, and a nonzero $\varepsilon_{\mathrm{zk}}$) is discussed after Corollary~\ref{cor:concrete}. The verification algorithm below is \emph{public}; it serves as a building block ($\WSVerify^{\mathsf{pub}}$) that Section \ref{sec:sdvs} encrypts to the designated verifier to obtain the strong designated-verifier withdrawable signature of Definition \ref{def:ws}.

\begin{algorithm}
\caption{Dilithium signature \cite{ducas2017}}\label{alg:dilithium}
\begin{algorithmic}[1]
\Procedure{KeyGen}{$1^\kappa$}
  \State $\bA\sample\Rq^{k\times l},\ (\bs_1,\bs_2)\sample S_\eta^l\times S_\eta^k$
  \State $\bt=\bA\bs_1+\bs_2$
  \State \textbf{return} $(pk=(\bA,\bt),\ sk=(\bA,\bt,\bs_1,\bs_2))$
\EndProcedure
\Procedure{Sign}{$\mu,sk$}
  \Repeat
    \State $\by\sample S_{\gamma_1-1}^l$
    \State $\bw=\HighBits(\bA\by,2\gamma_2)$
    \State $c\in\mathcal{C}=H(\mu,\bw)$
    \State $\bz=\by+c\bs_1$
  \Until{$(\norminf{\bz}<\gamma_1-\beta)\wedge(\norminf{\LowBits(\bA\by-c\bs_2,2\gamma_2)}<\gamma_2-\beta)$}
  \State \textbf{return} $\sigma=(\bz,c)$
\EndProcedure
\Procedure{Verify}{$\mu,pk,\sigma$}
  \State $\bw'=\HighBits(\bA\bz-c\bt,2\gamma_2)$
  \If{$\norminf{\bz}<\gamma_1-\beta$ and $c=H(\mu,\bw')$} \textbf{return} $1$
  \EndIf
\EndProcedure
\end{algorithmic}
\end{algorithm}

Taking the above algorithm as starting point and following the general construction in Section \ref{sec:abstract}, it is straightforward to set a withdrawable digital signature based on Dilithium.

Since Dilithium relies on the hardness of MSIS, Theorem \ref{thm:unforge} applies to prove that the proposal below is unforgeable under insider corruption. Theorem \ref{thm:withdraw} then proves withdrawability of the honest-key public core \emph{statistically}, under Assumption \ref{ass:entropy} and perfect naHVZK; the hardness of the decisional MLWE problem is used only for adversarially chosen verifier keys (Remark \ref{rem:wd}) and for the privacy of the encryption layer in Section \ref{sec:sdvs}.

Below follows the withdrawable signature based on Dilithium:

\begin{algorithm}
\caption{$\KeyGen(pp)$}\label{alg:dil-keygen}
\begin{algorithmic}[1]
\State parse $\bA$ from $pp$
\State $(\bs_1,\bs_2)\sample S_\eta^l\times S_\eta^k$
\State $\bt=\bA\bs_1+\bs_2$
\State $(ek,dk)\leftarrow\mathsf{KGen}_e(1^\kappa)$
\State \textbf{return} $pk=(\bA,\bt,ek),\ sk=(pk,\bs_1,\bs_2,dk)$
\end{algorithmic}
\end{algorithm}

\begin{algorithm}
\caption{$\WSign^{\mathsf{pub}}(\mu,sk_s,\pi)$}\label{alg:dil-wsign}
\begin{algorithmic}[1]
\State $\pi=(pk_s,pk_v)$
\Repeat \Comment{simulate verifier branch from the accepted distribution}
\State $c_v\sample\mathcal{C};\ \bz_v\sample S_{\gamma_1-\beta-1}^l;\ w_v=\HighBits(\bA\bz_v-c_v\bt_v,2\gamma_2)$
\Until{$\norminf{\LowBits(\bA\bz_v-c_v\bt_v,2\gamma_2)}<\gamma_2-\beta$}
\Repeat
  \State $\by\sample S_{\gamma_1-1}^l;\ w_s=\HighBits(\bA\by,2\gamma_2)$
  \State $g=H_M(\mu,w_s,w_v,\pi);\ c_s=\iota\big((g-\iota^{-1}(c_v))\bmod M\big)$
  \State $\bz_s=\by+c_s\bs_1'$
\Until{$(\norminf{\bz_s}<\gamma_1-\beta)\wedge(\norminf{\LowBits(\bA\by-c_s\bs_2',2\gamma_2)}<\gamma_2-\beta)$}
\State \textbf{return} $\sigma=(c_s,c_v,\bz_s,\bz_v)$
\end{algorithmic}
\end{algorithm}

Throughout, the signer instance of a key pair is written $sk_s=(pk_s,\bs_1',\bs_2',dk_s)$ with $pk_s=(\bA,\bt_s,ek_s)$, and the verifier instance $sk_v=(pk_v,\bs_1'',\bs_2'',dk_v)$; the primes are the per-party renaming of the generic $(\bs_1,\bs_2)$ output by $\KeyGen(pp)$.

\begin{remark}\label{rem:sym}
$\WSign^{\mathsf{pub}}$ is invoked by \emph{either} party: the caller runs its own branch for real (with the rejection loop) and simulates the other from the accepted distribution above. A verifier signs by swapping the roles of $s$ and $v$. Because both branches are now drawn from the accepted low-order distribution, the two invocations produce identically distributed $\sigma$ (Lemma~\ref{lem:perfectzk}, $ \varepsilon_{\mathrm{zk}}=0$), and the public predicates $\norminf{\LowBits(\bA\bz_s-c_s\bt_s,2\gamma_2)}<\gamma_2-\beta$ and $\norminf{\LowBits(\bA\bz_v-c_v\bt_v,2\gamma_2)}<\gamma_2-\beta$ hold simultaneously, revealing nothing about the origin.
\end{remark}

\begin{algorithm}
\caption{$\WSVerify^{\mathsf{pub}}(\mu,\pi,\sigma)$}\label{alg:dil-wsverify}
\begin{algorithmic}[1]
\State $\sigma=(c_s,c_v,\bz_s,\bz_v)$
\State $w_s'=\HighBits(\bA\bz_s-c_s\bt_s,2\gamma_2);\ w_v'=\HighBits(\bA\bz_v-c_v\bt_v,2\gamma_2)$
\State $g=H_M(\mu,w_s',w_v',\pi)$
\If{$(\iota^{-1}(c_s)+\iota^{-1}(c_v)\equiv g \!\!\pmod{M})\wedge(\norminf{\bz_s}<\gamma_1-\beta)\wedge(\norminf{\bz_v}<\gamma_1-\beta)$} \textbf{return} $1$
\EndIf
\end{algorithmic}
\end{algorithm}

\begin{algorithm}
\caption{$\Confirm^{\mathsf{pub}}(\mu,sk_s,\pi,\sigma)$}\label{alg:dil-confirm}
\begin{algorithmic}[1]
\State \textbf{return} $\tilde{\sigma}\leftarrow\Sign_{sk_s}(\mu\,\|\,\pi\,\|\,\sigma)$ \Comment{ordinary Dilithium signature, Alg. \ref{alg:dilithium}}
\end{algorithmic}
\end{algorithm}

\begin{algorithm}
\caption{$\CVerify^{\mathsf{pub}}(\mu,\pi,\sigma,\tilde{\sigma})$}\label{alg:dil-cverify}
\begin{algorithmic}[1]
\State $\sigma=(c_s,c_v,\bz_s,\bz_v)$
\If{$\WSVerify^{\mathsf{pub}}(\mu,\pi,\sigma)=1\ \wedge\ \Verify_{pk_s}(\mu\,\|\,\pi\,\|\,\sigma,\ \tilde{\sigma})=1$} \textbf{return} $1$
\EndIf
\end{algorithmic}
\end{algorithm}

\begin{remark}[Correctness]\label{rem:rr}
For the real branch, the rejection condition $\norminf{\LowBits(\bA\by-c_s\bs_2',2\gamma_2)}<\gamma_2-\beta$ yields $\HighBits(\bA\bz_s-c_s\bt_s,2\gamma_2)=\HighBits(\bA\by,2\gamma_2)=w_s$; the simulated branch matches by construction; and $\iota^{-1}(c_s)+\iota^{-1}(c_v)\equiv g$ by definition of $c_s$. Hence $\WSVerify$ accepts honest signatures, and $\CVerify$ accepts by correctness of $\Sign$. The quantity $r$ of the original construction is no longer used, so the former $r=r'$ identity is unnecessary.
\end{remark}

\paragraph{Instantiation of $\varepsilon_{\mathrm{zk}}$ and $\alpha$.}
Our scheme uses the no-hint identification scheme: verification recomputes $w'=\HighBits(\bA\bz-c\bt,2\gamma_2)$ from the full $\bt$ (no $\bt_0/\bt_1$ split, no hint). For this scheme the simulator of Lemma \ref{lem:zk} is perfect.

\begin{lemma}[Perfect naHVZK]\label{lem:perfectzk}
Conditioned on the programmed $H_M$-point being fresh, the simulator $\mathsf{Sim}(\mu,\pi)$ of Lemma \ref{lem:zk} produces a transcript identically distributed to a genuine $\WSign(\mu,sk_s,\pi)$; that is, $\varepsilon_{\mathrm{zk}}=0$.
\end{lemma}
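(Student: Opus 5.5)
The plan is to compare the two distributions branch by branch, writing the real signer branch as a change of variables. In $\WSign$ the verifier branch $(c_v,\bz_v,w_v)$ is generated by exactly the rejection loop used in $\mathsf{Sim}$. It samples $c_v\sample\mathcal{C}$ and $\bz_v\sample S^l_{\gamma_1-\beta-1}$ conditioned on the public low-order test. So this branch is identically distributed in both worlds and independent of everything else; it suffices to treat the signer branch together with the $H_M$-value.

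For the signer branch I would first show that, conditioned on freshness of the queried point, $g=H_M(\mu,w_s,w_v,\pi)$ is uniform on $\mathbb{Z}_M$ and independent of $\by$ and $c_v$. Hence $c_s=\iota((g-\iota^{-1}(c_v))\bmod M)$ is uniform on $\mathcal{C}$ and independent of $(\by,c_v)$. Then, for each fixed $c_s$, I compute the law of $\bz_s=\by+c_s\bs_1'$ on acceptance. This is the standard Dilithium counting argument. Since $\norminf{c_s\bs_1'}\le\beta$ and $\by\sample S^l_{\gamma_1-1}$, every $\bz\in S^l_{\gamma_1-\beta-1}$ has exactly one preimage $\by=\bz-c_s\bs_1'$ in $S^l_{\gamma_1-1}$. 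So the event $\norminf{\bz_s}<\gamma_1-\beta$ leaves $\bz_s$ uniform on $S^l_{\gamma_1-\beta-1}$, with acceptance probability independent of $c_s$ and $sk_s$.

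Next I use the no-hint identity $\bA\by-c_s\bs_2'=\bA\bz_s-c_s\bt_s$, which uses the full $\bt_s$. It turns the second rejection condition $\norminf{\LowBits(\bA\by-c_s\bs_2',2\gamma_2)}<\gamma_2-\beta$ into the public predicate $\norminf{\LowBits(\bA\bz_s-c_s\bt_s,2\gamma_2)}<\gamma_2-\beta$ on $(c_s,\bz_s)$. As in Remark~\ref{rem:rr}, it also gives $w_s=\HighBits(\bA\bz_s-c_s\bt_s,2\gamma_2)$. Therefore the accepted $(c_s,\bz_s)$ is distributed as uniform $(c_s,\bz_s)\in\mathcal{C}\times S^l_{\gamma_1-\beta-1}$ conditioned on the public low-order test, and $w_s$ is a deterministic function of it. This is precisely $\mathsf{Sim}$'s signer branch. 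The induced $H_M$-entry is $g\equiv\iota^{-1}(c_s)+\iota^{-1}(c_v)$ in both worlds, by the definition of $c_s$ in $\WSign$ and by programming in $\mathsf{Sim}$. The joint law of $(\sigma,H_M\text{-entry})$ therefore coincides.

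The main obstacle is the interaction between the rejection loop and the random oracle. Each iteration of the real loop queries $H_M$ at a fresh $(\mu,w_s,w_v,\pi)$, and rejected iterations leave entries the adversary never sees. I would handle this by viewing each iteration, conditioned on freshness, as an independent trial with uniform $c_s$. The accepted trial is then the first success of i.i.d.\ trials whose acceptance event and conditional output law both depend only on $(c_s,\bz_s)$. This matches $\mathsf{Sim}$'s loop: its success probability is also independent of $sk_s$, and its output law is the same conditional law. Rejected $H_M$ entries are uniform values at points the adversary would have to guess, and that loss is already covered by the freshness conditioning (Assumption~\ref{ass:entropy}). I also need to check that the counting argument uses strict inequalities consistently, so that $S_{\gamma_1-\beta-1}$ is exactly the accepted range.
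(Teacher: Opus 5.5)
Your proposal is correct and follows essentially the same route as the paper's proof. Both arguments rest on the same four steps:
\begin{itemize}
\item a uniform challenge obtained from the fresh oracle value;
\item the unique-preimage counting argument, which makes $\bz_s$ uniform on $S^l_{\gamma_1-\beta-1}$;
\item the no-hint identity $\bA\by-c_s\bs_2'=\bA\bz_s-c_s\bt_s$, which turns the low-order test into a public predicate and also recovers $w_s$;
\item the observation that the verifier branch is the same simulation in both worlds.
\end{itemize}
Your explicit derivation of the uniformity of $c_s$ from $g$, and your treatment of the rejection loop, are more careful than the paper, which simply fixes $c$ as uniform. One caveat: oracle entries left by rejected iterations are covered neither by the lemma's freshness conditioning nor by Assumption~\ref{ass:entropy}, which is stated for accepted runs, so they would need their own negligible guessing term when the lemma is used in the theorems.
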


\begin{proof}
Fix the challenge $c$ (uniform in the real game as $H_M$ is a random oracle, and uniform in the simulation by programming a fresh point). In $\WSign$, $\by$ is uniform over $S_{\gamma_1-1}^l$ and $\bz=\by+c\bs_1'$; conditioned on $\norminf{\bz}<\gamma_1-\beta$, $\bz$ is uniform over $S_{\gamma_1-\beta-1}^l$, because each target $\bz$ in that box has a unique in-range preimage $\by=\bz-c\bs_1'$ (as $\norminf{c\bs_1'}\le\beta$). The further test $\norminf{\LowBits(\bA\by-c\bs_2',2\gamma_2)}<\gamma_2-\beta$ equals $\norminf{\LowBits(\bA\bz-c\bt_s,2\gamma_2)}<\gamma_2-\beta$ (since $\bA\bz-c\bt_s=\bA\by-c\bs_2'$), a function of $(c,\bz)$ and the public $\bt_s$ only; and the recomputed commitment is $\HighBits(\bA\bz-c\bt_s,2\gamma_2)=w_s$. The corrected simulator (Lemma \ref{lem:zk}) draws $\bz$ uniformly over $S_{\gamma_1-\beta-1}^l$ with the identical low-order test and sets $w_s=\HighBits(\bA\bz-c\bt_s,2\gamma_2)$. Hence the conditional law of $(c,\bz,w_s)$ is identical in both, and the verifier branch is produced by the same simulation on both sides. The transcripts therefore coincide.
\end{proof}

\begin{table}[h]\centering\small
\begin{tabular}{lccccc}
\toprule
 & $(k,l)$ & $\eta,\tau,\beta$ & $\gamma_1,\gamma_2$ & $\tfrac{q-1}{2\gamma_2}$ & $\alpha\ (\ge)$ \\
\midrule
Dilithium2 / ML-DSA-44 & $(4,4)$ & $2,39,78$ & $2^{17},\,95232$ & $44$ & $1397$ \\
Dilithium3 / ML-DSA-65 & $(6,5)$ & $4,49,196$ & $2^{19},\,261888$ & $16$ & $1024$ \\
Dilithium5 / ML-DSA-87 & $(8,7)$ & $2,60,120$ & $2^{19},\,261888$ & $16$ & $1024$ \\
\bottomrule
\end{tabular}
\caption{Parameters ($n=256$, $q=8380417$) and the resulting commitment min-entropy
lower bound $\alpha$ (bits). Challenge entropies $\log_2|B_\tau|=192.8,225.3,257.0$.}
\end{table}

\begin{corollary}[Concrete security]\label{cor:concrete}
Instantiated with no-hint Dilithium, the scheme has $\varepsilon_{\mathrm{zk}}=0$ (Lemma \ref{lem:perfectzk}) and $\alpha\ge 1024$ (Assumption~\ref{ass:entropy}). Hence, for any adversary making $q_W$ signing and $q_{H_M}$ random-oracle queries, Theorems \ref{thm:unforge} and \ref{thm:withdraw} give
\begin{align*}
\Adv^{\eufic}_{WS}(\adv)&\le \Adv^{\eufcma}_{DS}(\mathcal{B})+q_W(q_{H_M}+q_W)\,2^{-1024},\\
\Adv^{\mathrm{Withdraw}}_{WS}(\adv)&\le \tfrac12+(q_W{+}1)(q_{H_M}{+}q_W{+}1)\,2^{-1024}.
\end{align*}
Unforgeability reduces to the EUF-CMA security of the underlying no-hint Dilithium-style signature (up to the $2^{-1024}$ programming term), whereas withdrawability follows \emph{ statistically} from perfect naHVZK ($\varepsilon_{\mathrm{zk}}=0$) and Assumption~\ref{ass:entropy}, up to the same $2^{-1024}$ programming term; it does not reduce to EUF-CMA.
\end{corollary}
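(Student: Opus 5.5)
The corollary follows by substituting the two concrete parameters into the general bounds already proved, so I would argue in three short steps.

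First, I would fix $\varepsilon_{\mathrm{zk}}=0$. Lemma~\ref{lem:perfectzk} shows that, for the no-hint, full-$\bt$ identification scheme, the simulator $\mathsf{Sim}$ of Lemma~\ref{lem:zk} reproduces the genuine $\WSign$ transcript exactly, conditioned on freshness of the programmed point. Two facts are essential here, and I would check both explicitly.
\begin{itemize}
  \item The simulator samples $\bz_s,\bz_v$ from $S^l_{\gamma_1-\beta-1}$ and applies the same public low-order test as the real rejection loop, namely $\norminf{\LowBits(\bA\bz-c\bt,2\gamma_2)}<\gamma_2-\beta$. This is a function of $(c,\bz,\bt)$ only, because $\bA\bz-c\bt_s=\bA\by-c\bs_2'$.
  \item In the real scheme the verifier branch is already produced by this same simulation.
\end{itemize}
Since both $\mathsf{Sim}$ and the real signer's distribution depend only on $\pi$, this single value of $\varepsilon_{\mathrm{zk}}$ serves both Theorem~\ref{thm:unforge}, where the prover is the signer, and Theorem~\ref{thm:withdraw}, where the prover is party $b\in\{0,1\}$.

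Second, I would take $\alpha\ge 1024$ from Assumption~\ref{ass:entropy}, using the table value that holds for all three parameter sets (level~2 even gives $1397$). Because the programming terms decrease in $\alpha$, it follows that $2^{-\alpha}\le 2^{-1024}$.

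Third, I would substitute into the two theorems. In Theorem~\ref{thm:unforge} the term $q_W\varepsilon_{\mathrm{zk}}$ vanishes and $q_W(q_{H_M}+q_W)/2^{\alpha}\le q_W(q_{H_M}+q_W)2^{-1024}$. In Theorem~\ref{thm:withdraw} the term $(q_W+1)\varepsilon_{\mathrm{zk}}$ vanishes, leaving $\tfrac12+(q_W+1)(q_{H_M}+q_W+1)2^{-1024}$. The reduction $\mathcal{B}$ and the underlying signature scheme $DS$ are unchanged, so the first bound is stated against $\Adv^{\eufcma}_{DS}(\mathcal{B})$ for the no-hint Dilithium-style scheme.

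For the qualitative claims, the withdrawability proof of Theorem~\ref{thm:withdraw} consists entirely of information-theoretic hybrids with no reduction to EUF-CMA. The resulting bound is therefore statistical, holding for any adversary limited to $q_{H_M}$ oracle queries, and depends only on Lemma~\ref{lem:perfectzk} and Assumption~\ref{ass:entropy}.

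The one point needing care is that Lemma~\ref{lem:perfectzk} holds only conditionally on freshness. The corollary therefore must not silently drop the collision term; it keeps it as the $2^{-1024}$ programming term, which is exactly how the hybrids of the two theorems already account for it. Beyond this, I do not expect a genuine obstacle: the argument is a direct substitution, given the assumed min-entropy and the exactness of the no-hint simulator.
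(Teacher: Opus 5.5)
Your proposal is correct and follows the paper's own argument: the corollary is obtained by substituting $\varepsilon_{\mathrm{zk}}=0$ (Lemma~\ref{lem:perfectzk}) and $2^{-\alpha}\le 2^{-1024}$ (Assumption~\ref{ass:entropy}) into the bounds of Theorems~\ref{thm:unforge} and~\ref{thm:withdraw}, keeping the freshness (programming) term. The one step you state loosely is the claim that the real signer's distribution ``depends only on $\pi$'', so that the perfect simulation also covers party $b=1$ as prover; this is a consequence of the lemma rather than a premise, and the paper justifies it through the role swap of Remark~\ref{rem:sym}, exactly as the proof of Theorem~\ref{thm:withdraw} invokes it.
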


\begin{remark}[Hint-optimised Dilithium]
Production Dilithium publishes $\bt_1$ (not $\bt$) plus a hint $\mathbf{h}$, which trades the exact recovery $\HighBits(\bA\bz-c\bt)=w_s$ for $\mathsf{UseHint}$. That variant has a small, standard naHVZK defect in place of $\varepsilon_{\mathrm{zk}}=0$ and slightly smaller signatures; our no-hint choice keeps the proofs clean (perfect naHVZK) at a modest size cost. Either way the asymptotic and concrete security are governed by the same $\Adv^{\eufcma}_{DS}$ and $2^{-\alpha}$ terms.
\end{remark}

\subsection{Strong designated-verifier variant}\label{sec:sdvs}

The scheme of Sections \ref{sec:abstract} and 3.2 is publicly verifiable. To match the Liu--Baek--Susilo model \cite{liu2023}, where a withdrawable signature is verifiable \emph{only} by the designated verifier, we upgrade it to a strong designated-verifier (SDVS) scheme: we hybrid-encrypt the publicly-verifiable signature to the verifier (and to the signer, so that $\Confirm$ stays stateless). Write $(\WSign^{\mathsf{pub}},\WSVerify^{\mathsf{pub}},\Confirm^{\mathsf{pub}},\CVerify^{\mathsf{pub}})$ for the scheme of Section \ref{subsec:instantiation} (Algorithms \ref{alg:dil-wsign}--\ref{alg:dil-cverify}), where $\WSVerify^{\mathsf{pub}}(\mu,\pi,\rho)$ is public; $\KeyGen$ is as amended (each party holds $(ek,dk)$). Let $\rho$ denote a public ring signature and $\|\rho\|$ its bit length.

\begin{algorithm}
\caption{$\WSign(\mu,sk_s,\pi)$ (strong DVS)}\label{alg:sdvs-wsign}
\begin{algorithmic}[1]
\State $\rho\leftarrow\WSign^{\mathsf{pub}}(\mu,sk_s,\pi)$ \Comment{public OR-proof ring signature}
\State $K\sample\{0,1\}^\kappa$
\State $ct_s\leftarrow\mathsf{Enc}(ek_s,K);\quad ct_v\leftarrow\mathsf{Enc}(ek_v,K)$
\State $C\leftarrow\rho\oplus G(K)$ \Comment{$G(K)$ truncated to $\|\rho\|$ bits}
\State \textbf{return} $\sigma=(ct_s,ct_v,C)$
\end{algorithmic}
\end{algorithm}

\begin{algorithm}
\caption{$\WSVerify(\mu,sk_v,pk_s,\sigma)$ (strong DVS; designated)}\label{alg:sdvs-wsverify}
\begin{algorithmic}[1]
\State parse $pk_v$ from $sk_v$;\ $\pi=\{pk_s,pk_v\}$; parse $\sigma=(ct_s,ct_v,C)$
\State $K\leftarrow\mathsf{Dec}(dk_v,ct_v);\quad \rho\leftarrow C\oplus G(K)$
\State \textbf{return} $\WSVerify^{\mathsf{pub}}(\mu,\pi,\rho)$
\end{algorithmic}
\end{algorithm}

\begin{algorithm}
\caption{$\Confirm(\mu,sk_s,\pi,\sigma)$ (strong DVS)}\label{alg:sdvs-confirm}
\begin{algorithmic}[1]
\State parse $\sigma=(ct_s,ct_v,C)$
\State $K\leftarrow\mathsf{Dec}(dk_s,ct_s);\quad \rho\leftarrow C\oplus G(K)$ \Comment{signer recovers $\rho$ via its own $dk_s$}
\State $\bar{\sigma}\leftarrow\Sign_{sk_s}(\mu\,\|\,\pi\,\|\,\sigma)$
\State \textbf{return} $\tilde{\sigma}=(\bar{\sigma},\rho,K)$
\end{algorithmic}
\end{algorithm}

\begin{algorithm}
\caption{$\CVerify(\mu,\pi,\sigma,\tilde{\sigma})$ (strong DVS; public)}\label{alg:sdvs-cverify}
\begin{algorithmic}[1]
\State parse $\sigma=(ct_s,ct_v,C),\ \tilde{\sigma}=(\bar{\sigma},\rho,K)$
\If{$\big(C=\rho\oplus G(K)\big)\wedge\big(\WSVerify^{\mathsf{pub}}(\mu,\pi,\rho)=1\big)\wedge\big(\Verify_{pk_s}(\mu\,\|\,\pi\,\|\,\sigma,\bar{\sigma})=1\big)$} \textbf{return} $1$
\EndIf
\end{algorithmic}
\end{algorithm}

\begin{remark}[Scope of confirmed verification]\label{rem:cverify-scope}
$\CVerify$ certifies that the signer authenticated the outer object $\sigma$ (via $\bar\sigma=\Sign_{sk_s}(\mu\|\pi\|\sigma)$) and that the revealed pad opens $C$ to a public core $\rho$ that passes $\WSVerify^{\mathsf{pub}}$. It does \emph{not} publicly certify ciphertext consistency, i.e.\ that $ct_s$ and $ct_v$ encrypt the same $K$: for randomised $\PKE$ this cannot be checked without opening the encryption randomness. Confirmation is therefore a signer-authenticated opening of $\sigma$ to a valid ambiguous core, which is all correctness (Definition~\ref{def:correct}) and unforgeability (Theorem~\ref{thm:unforge}) require; it does not retroactively attest that the designated verifier could decrypt $\sigma$ prior to confirmation. For honestly generated $\sigma$ the two ciphertexts do share $K$, so $\WSVerify=1\Rightarrow\CVerify=1$ as stated.
\end{remark}

\begin{proposition}[SDVS variant]\label{prop:sdvs}
Assume $\PKE$ has ciphertext pseudorandomness (advantage $\Adv^{\mathrm{pr}}_{\PKE}$) and $G$ is a secure PRG. Then Algorithms \ref{alg:sdvs-wsign}--\ref{alg:sdvs-cverify} form a correct withdrawable signature (Definitions~\ref{def:ws}--\ref{def:withdraw}) with \emph{designated} verification, and:
\begin{itemize}
  \item \emph{(Designated verification / privacy.)} Without $dk_s$ or $dk_v$, $\sigma$ is pseudorandom; in particular third parties cannot publicly verify $\sigma$; recovering $\rho$ (and hence verifying) requires a decryption key, so both the designated verifier (via $dk_v$) and the signer (via $dk_s$) can verify.
  \item \emph{(Withdrawability.)} For any PPT third-party adversary making $q_W$ signing and $q_{H_M}$ random-oracle queries,
  $\Adv^{\mathrm{Withdraw}}:=\Pr[\Exp^{\mathrm{Withdraw}}_{WS,\adv}=1]\le \tfrac12+(q_W{+}1)\big(2\,\Adv^{\mathrm{pr}}_{\PKE}+\Adv^{\mathrm{prg}}_{G}\big)$, where $\Adv^{\mathrm{pr}}_{\PKE}$ is the ciphertext-pseudorandomness advantage of $\PKE$. Against an adversary holding the \emph{full} verifier secret key $sk_v=(pk_v,\bs_1'',\bs_2'',dk_v)$ the bound is $\tfrac12+(q_W{+}1)\,\varepsilon_{\mathrm{zk}}$: by the perfect naHVZK of Lemma~\ref{lem:perfectzk} the verifier branch is identically distributed whether produced for real or simulated, so the witness confers no distinguishing power. This equals exactly $\tfrac12$ for the no-hint Dilithium instantiation.
  \item \emph{(Unforgeability under insider corruption.)} Reduces to the EUF-CMA security of $\Sign$ exactly as in Theorem \ref{thm:unforge}; the reduction chooses $K$ and the ciphertexts itself, hence needs no secret key.
\end{itemize}
\end{proposition}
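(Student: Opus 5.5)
I will prove the four claims of Proposition~\ref{prop:sdvs} in turn, treating the public core $\rho$ as a black box whose properties are already given by Theorems~\ref{thm:unforge}, \ref{thm:withdraw} and Lemma~\ref{lem:perfectzk}.

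\textbf{Correctness.} Correctness follows directly from correctness of $\PKE$ and Remark~\ref{rem:rr}. For an honest $\sigma=(ct_s,ct_v,C)$ both ciphertexts encrypt the same $K$. Hence $\mathsf{Dec}(dk_v,ct_v)=K$, which gives $C\oplus G(K)=\rho$, and $\WSVerify^{\mathsf{pub}}$ accepts $\rho$. Likewise $\Confirm$ recovers the same $(\rho,K)$ through $dk_s$, so the check $C=\rho\oplus G(K)$ holds and $\Verify_{pk_s}$ accepts $\bar\sigma$ by correctness of $\Sign$.

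\textbf{Privacy and designated verification.} I will use a three-step hybrid on a single signature.
\begin{enumerate}
\item Replace $ct_s$ by a uniform ciphertext; the cost is $\Adv^{\mathrm{pr}}_{\PKE}$, since the reduction only needs $ek_s$.
\item Replace $ct_v$ by a uniform ciphertext; the cost is again $\Adv^{\mathrm{pr}}_{\PKE}$.
\item Once $K$ no longer appears in the ciphertexts, it enters $\sigma$ only through $G(K)$. Replace $G(K)$ by a uniform string, at cost $\Adv^{\mathrm{prg}}_G$. The mask then becomes a one-time pad, so $C$ is uniform and independent of $\rho$.
\end{enumerate}
After these steps $\sigma$ is uniform and independent of both $\mu$ and the signer's identity. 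This proves pseudorandomness and that a third party cannot verify without a decryption key.

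\textbf{Withdrawability against third parties.} I apply the same three-step hybrid to each of the $N=q_W+1$ signatures, one at a time, using the standard hybrid argument in which the reduction plays the remaining signatures itself. This is possible because it holds both $sk_0$ and $sk_1$ and only needs $ek_0,ek_1$ for the embedded challenge. After all replacements the adversary's view is independent of $b$, which gives the stated bound $\tfrac12+N(2\Adv^{\mathrm{pr}}_{\PKE}+\Adv^{\mathrm{prg}}_G)$. The inner $\rho$ never needs to be simulated, so no $\varepsilon_{\mathrm{zk}}$ or programming term appears, matching the statement.

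\textbf{Withdrawability against a holder of $sk_v$.} Here the encryption layer gives no protection, since the adversary decrypts every $\rho$. I will argue the following instead.
\begin{enumerate}
\item The pair $(K,ct_s,ct_v)$ is sampled identically whichever party signs, so it suffices that the law of $\rho$ does not depend on the signer.
\item By Remark~\ref{rem:sym} and Lemma~\ref{lem:perfectzk}, $\WSign^{\mathsf{pub}}$ run by either party yields identically distributed $\rho$. Each branch is either a perfectly simulated accepted transcript or a real one, and the two have the same law.
\item The challenge $(c_s,c_v)$ is uniform on $\mathcal{C}^2$ subject to the hash constraint; this is the argument in the proof of Lemma~\ref{lem:zk}.
\item The witness $(\bs_1'',\bs_2'')$ is independent of this output law, so it confers no advantage.
\end{enumerate}
This gives $\tfrac12+N\varepsilon_{\mathrm{zk}}$, and exactly $\tfrac12$ in the no-hint case.

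The main obstacle is step~2 of this argument. Both real signing distributions must coincide exactly without passing through programmed random-oracle points. Otherwise the freshness term of Theorem~\ref{thm:withdraw} would reappear, contradicting the bound as stated. I would first establish this by direct coupling: each real branch outputs $\bz$ uniform on $S_{\gamma_1-\beta-1}^l$ conditioned on the same public low-order test, exactly as in the proof of Lemma~\ref{lem:perfectzk}. Then only the order in which the branches are sampled differs, and that does not change the joint law given the random oracle $H_M$. I will need to check that the adversary's previous $H_M$ queries do not break this symmetry, arguing via conditioning on the full oracle table.

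\textbf{Unforgeability under insider corruption.} I repeat the reduction of Theorem~\ref{thm:unforge}. $\mathcal{B}$ answers $\Ora{WSign}$ by $\mathsf{Sim}$, chooses $K$ itself, encrypts $K$ under $ek_s,ek_v$, and forms $C$. It answers $\Ora{Confirm}$ by calling its signing oracle on $\mu\|\pi\|\sigma$ and returning $(\bar\sigma,\rho,K)$. A valid forgery contains $\bar\sigma^\ast$ on the new message $\mu^\ast\|\pi\|\sigma^\ast$, and this is a forgery against $\Sign$.
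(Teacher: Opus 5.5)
\textbf{Overall.} Structurally your proposal is the paper's proof. It has the same correctness check via $\mathsf{Dec}$ and Remark~\ref{rem:rr}, and the same hybrids $ct_s\to$ uniform, $ct_v\to$ uniform, $G(K)\to$ uniform, applied once per signature for third parties. For a holder of $sk_v$ it compares the two \emph{real} $\WSign^{\mathsf{pub}}$ laws directly. For unforgeability it uses the reduction of Theorem~\ref{thm:unforge}, with $\mathcal{B}$ choosing $K$, encrypting it itself and answering $\Ora{Confirm}$ from the stored $(\rho,K)$. One small correction: in the third-party hybrid the pseudorandomness reduction cannot hold the full $sk_b$, since that contains $dk_b$. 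It should generate only $(\bs_1,\bs_2)$ and take $ek_b$ from its challenger.

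\textbf{The gap.} It sits in the step you single out as the main obstacle, and your planned fix does not work. Fix the table $H_M$, and let $P_s,P_v$ be the sets of accepted pairs $(c,\bz)$ of the two branches.
\begin{itemize}
\item If party $s$ signs, $(c_v,\bz_v)$ is uniform on $P_v$. The rejection loop then returns $(c_s,\bz_s)$ uniform on the fibre $T_s(c_v,\bz_v)=\{(c_s,\bz_s)\in P_s:\ \iota^{-1}(c_s)+\iota^{-1}(c_v)\equiv H_M(\mu,w_s,w_v,\pi)\pmod{M}\}$. So a valid tuple has probability $1/(|P_v|\cdot|T_s(c_v,\bz_v)|)$.
\item If party $v$ signs, the same tuple has probability $1/(|P_s|\cdot|T_v(c_s,\bz_s)|)$, with the symmetric fibre $T_v$.
\end{itemize}
The fibre sizes depend on the particular table, so for a typical fixed $H_M$ the two laws do not coincide. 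The claim that ``only the sampling order differs'' is therefore false once you condition on the oracle table.

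The two laws agree only after averaging over a lazily sampled $H_M$, and only on the event that every point the rejection loop touches is fresh. That means:
\begin{itemize}
\item no repeated $w_s$ (resp.\ $w_v$) across iterations;
\item no collision with $\adv$'s earlier queries;
\item no later query by $\adv$ to a point fixed in a rejected iteration, whose value is biased by the rejection.
\end{itemize}
Bounding the complement of this event brings back a $2^{-\alpha}$-type term. So this route gives $\tfrac12+(q_W{+}1)\varepsilon_{\mathrm{zk}}$ plus a negligible freshness term, not the exact $\tfrac12$ claimed for the no-hint instantiation. The paper does not close this point either. It invokes Lemmas~\ref{lem:zk} and~\ref{lem:perfectzk} ``as distributions, with no oracle reprogramming'', but both lemmas are stated only conditioned on the challenge point being fresh. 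You should state the bound with that term, or claim exactness only conditioned on freshness.
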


\begin{proof}
The strong-DV algorithms (Algorithms \ref{alg:sdvs-wsign}--\ref{alg:sdvs-cverify}) are built on the public core 
\[
(\WSign^{\mathsf{pub}},\WSVerify^{\mathsf{pub}},\Confirm^{\mathsf{pub}}, \CVerify^{\mathsf{pub}})
\] 
of Section \ref{subsec:instantiation}, a public-key encryption scheme $\PKE=(\mathsf{KGen}_e,\Enc,\Dec)$ with decryption-failure probability $\delta_{\PKE}$ and ciphertext-pseudorandomness advantage $\Adv^{\mathrm{pr}}_{\PKE}$ (i.e.\ for any message $m$, $\Enc(ek,m)$ is indistinguishable from a uniform ciphertext given $ek$ but not $dk$; this implies IND-CPA and key-privacy), and a PRG $G$ with advantage $\Adv^{\mathrm{prg}}_{G}$. Recall $\sigma=(ct_s,ct_v,C)$, $ct_s=\Enc(ek_s,K)$, $ct_v=\Enc(ek_v,K)$, $C=\rho\oplus G(K)$, $\rho=\WSign^{\mathsf{pub}}(\mu,sk_s,\pi)$, $K\xleftarrow{\$}\{0,1\}^\kappa$.

The verifier recovers $K=\Dec(dk_v,ct_v)$ and $\rho=C\oplus G(K)$, so 
\[
\WSVerify=\WSVerify^{\mathsf{pub}}(\mu,\pi,\rho)=1
\] 
by correctness of the core. $\Confirm$ recovers the same $(\rho,K)$ via $dk_s$ and outputs $\tilde\sigma=(\bar\sigma,\rho,K)$, $\bar\sigma=\Sign_{sk_s}(\mu\|\pi\|\sigma)$; then $\CVerify$ makes the following verifications: $C=\rho\oplus G(K)$ (true), $\WSVerify^{\mathsf{pub}}(\mu,\pi,\rho)=1$, and $\Verify_{pk_s}(\mu\|\pi\|\sigma,\bar\sigma)=1$, all of which hold. The error is at most $2\delta_{\PKE}$ plus the negligible correctness errors of the core and of $\Sign$.

We show $\sigma$ is pseudorandom to any party lacking both $dk_s,dk_v$. Hybrids: $\mathsf{H}_0=$ real $\sigma$; $\mathsf{H}_1$ replaces $ct_s$ by a uniform string; $\mathsf{H}_2$ also replaces $ct_v$ by a uniform string; $\mathsf{H}_3$ also replaces $G(K)$ by a uniform string $R$. Each PKE step costs $\le\Adv^{\mathrm{pr}}_{\PKE}$ (the reduction picks $K$, embeds the challenge as $ct_s$ resp.\ $ct_v$, and computes the rest from the known $K$; no decryption key is used); the PRG step costs $\le\Adv^{\mathrm{prg}}_{G}$ (in $\mathsf{H}_2$, $K$ feeds only $G$). In $\mathsf{H}_3$, $C=\rho\oplus R$ is uniform and independent of $\rho$, so $\sigma$ is uniform. Hence $\Delta(\sigma,\mathcal U)\le 2\Adv^{\mathrm{pr}}_{\PKE}+\Adv^{\mathrm{prg}}_{G}$. Consequently no public procedure can decide validity of $\sigma$ (it would distinguish $\mathsf{H}_0$ from $\mathsf{H}_3$); verification requires a decryption key, so the designated verifier (via $dk_v$) and the signer (via $dk_s$) can verify, while third parties cannot.

In Definition \ref{def:withdraw} the adversary $\adv$ holds no secret keys and sees $N:=q_W+1$ signatures (challenge and oracle replies) under the bit $b$. Applying the previous hybrid to each, every signature is within $2\Adv^{\mathrm{pr}}_{\PKE}+\Adv^{\mathrm{prg}}_{G}$ of uniform \emph{independently of $b$} (a uniform $\sigma$ reveals neither $\rho$ nor which keys were used). Thus each of the two conditional views is within $N(2\Adv^{\mathrm{pr}}_{\PKE}+\Adv^{\mathrm{prg}}_{G})$ of the same $b$-independent all-uniform view, whence
\[
\Pr[\Exp^{\mathrm{Withdraw}}_{WS,\adv}=1]\le\tfrac12+(q_W{+}1)\big(2\Adv^{\mathrm{pr}}_{\PKE}+\Adv^{\mathrm{prg}}_{G}\big).
\]

If $\adv$ holds the full verifier secret key $sk_v=(pk_v,\bs_1'',\bs_2'',dk_v)$ (in particular its $dk_v$), it decrypts and recovers $\rho=\WSign^{\mathsf{pub}}(\mu,sk_b,\pi)$; the ciphertexts and $C$ are then determined by $(\rho,K)$ and carry no further information about $b$. Here we do \emph{not} route through the simulator/programming hybrid of Theorem~\ref{thm:withdraw}; instead we compare the two \emph{real} public-core distributions $\WSign^{\mathsf{pub}}(\cdot,sk_0,\pi)$ and $\WSign^{\mathsf{pub}}(\cdot,sk_1,\pi)$ directly. In each, one branch is produced for real and the other is drawn from the accepted low-order distribution, and in both cases the challenge $g=H_M(\mu,w_s,w_v,\pi)$ is read from the honest random oracle. By the per-branch naHVZK of Lemma~\ref{lem:zk}, a real accepted branch and a simulated accepted branch are within statistical distance $\varepsilon_{\mathrm{zk}}$ \emph{as distributions}, with no oracle reprogramming; hence the two real signing distributions are within $2\varepsilon_{\mathrm{zk}}$ per signature, and no $2^{-\alpha}$ freshness term is incurred. Summing over the $N=q_W{+}1$ signatures, $\Pr[\Exp^{\mathrm{Withdraw}}=1]\le\tfrac12+(q_W{+}1)\varepsilon_{\mathrm{zk}}$. For the no-hint Dilithium instantiation $\varepsilon_{\mathrm{zk}}=0$ (Lemma~\ref{lem:perfectzk}), so the two real distributions are \emph{identical} and the excess advantage $\Pr[\Exp^{\mathrm{Withdraw}}=1]-\tfrac12$ is exactly $0$: even the designated verifier cannot tell whether it or the signer produced $\sigma$.

The reduction is that of Theorem \ref{thm:unforge}, with the reduction performing the encryption itself. $\mathcal B$ receives the $\Sign$-challenge key $pk^\ast$, sets $pk_s:=pk^\ast$, $\mathcal B$ receives the $\Sign$-challenge key $pk^\ast$ (the signing component only). It generates a fresh encryption key pair $(ek_s,dk_s)\leftarrow\mathsf{KGen}_e(1^\kappa)$ itself and sets $pk_s:=(pk^\ast,ek_s)$, keeping $dk_s$ private; it never uses $dk_s$ (confirmation replies are served from stored $(\rho,K)$, and signing queries encrypt under the public $ek_s$). It generates all other keys honestly (so it holds $sk_v$ and may give it to $\adv$). It answers: $\Ora{WSign}(\mu)$ by simulating $\rho$ with the witness-free simulator of Lemma \ref{lem:zk} (programming $H_M$), drawing $K$, setting $ct_s=\Enc(ek_s,K)$, $ct_v=\Enc(ek_v,K)$, $C=\rho\oplus G(K)$, storing $(\sigma,\rho,K)$ in $W$, and returning $\sigma$; and $\Ora{Confirm}(\mu,\sigma)$, for $\sigma\in W$, by retrieving the stored $(\rho,K)$ and returning $(\Sign_{sk^\ast}(\mu\|\pi\|\sigma),\rho,K)$ through the signing oracle. Thus $\mathcal B$ uses neither $sk_s$ nor any decryption key. A confirmed forgery $(\mu^\ast,\sigma^\ast,\tilde\sigma^\ast=(\bar\sigma^\ast,\rho^\ast,K^\ast))$ with $\mu^\ast$ never confirmed satisfies $\Verify_{pk^\ast}(\mu^\ast\|\pi\|\sigma^\ast,\bar\sigma^\ast)=1$ on the fresh message $\mu^\ast\|\pi\|\sigma^\ast$, which $\mathcal B$ outputs. As in Theorem \ref{thm:unforge},
\[
\Adv^{\eufic}_{WS}(\adv)\le \Adv^{\eufcma}_{DS}(\mathcal B)+q_W\,\varepsilon_{\mathrm{zk}}+q_W(q_{H_M}+q_W)/2^{\alpha},
\]
with no contribution from $\PKE$ or $G$ (which $\mathcal B$ evaluates honestly).
\end{proof}

\subsection{Comparison with Liu--Baek--Susilo}\label{sec:comparison}

Our construction realises the same primitive and the same security model as Liu, Baek and Susilo \cite{liu2023}: the syntax of Definition \ref{def:ws} and the security notions of Definitions \ref{def:correct}, \ref{def:euf} and \ref{def:withdraw} are theirs, and we follow their two-stage ``withdraw, then confirm'' template. The differences are confined to the \emph{construction}, and each of them follows from a single structural fact about lattices.

\paragraph{Root cause: the group shared secret has no cheap lattice analogue.}
The designation mechanism of \cite{liu2023} is, at its core, a Diffie--Hellman (or pairing) shared secret embedded in the signature. In their Schnorr-based scheme the withdrawable signature contains $\sigma_2=pk_v^{\,z-rt}$, which the designated verifier checks by raising to its secret key $sk_v$ via $(g^a)^b=(g^b)^a$; the BLS-based scheme uses $e(\cdot,g^{sk_v})$ in the same way. A single algebraic object thereby delivers \emph{both} properties required of a withdrawable signature at once: signer/verifier ambiguity --- either party can compute $\sigma$ from its own secret --- and designation --- verification requires $sk_v$ --- using nothing but the parties' ordinary public keys, in effect a non-interactive key agreement (NIKE). This shared secret is essentially free in a group but has no cheap lattice counterpart: the noisy analogue of $(g^a)^b\approx(g^b)^a$ agrees only approximately, so a literal transcription that blinds the signature by a uniform matrix $\bB\in\Rq^{k\times k}$ leaves a residue $\bB\bs_2''$ that the verifier cannot cancel (it is given $\bB\bA$, never $\bB$). Lattice NIKE is in fact possible: after a long line of negative evidence, including information-theoretic efficiency barriers for polynomial-modulus reconciliation \cite{guo2022}, the recent SWOOSH scheme \cite{gajland2024} gives the first practical M-LWE NIKE --- but at a cost (public keys of hundreds of kilobytes, a super-polynomial modulus-to-noise ratio, and a NIZK for active security) well above what designation needs. Crucially, designation does \emph{not} require a shared secret at all: it only needs to deliver a key to the designated verifier, which is public-key encryption --- a weaker and far cheaper primitive. We therefore keep ambiguity and designation as \emph{separate} mechanisms, an OR-proof for the former and encryption for the latter; this is the origin of every row of Table \ref{tab:comparison}.

\begin{table*}[t]
\centering\footnotesize
\renewcommand{\arraystretch}{1.25}
\begin{tabularx}{\textwidth}{@{}>{\raggedright\arraybackslash}p{2.4cm} X X >{\raggedright\arraybackslash}p{2.4cm}@{}}
\toprule
\textbf{Aspect} & \textbf{Liu--Baek--Susilo \cite{liu2023}} & \textbf{This work} & \textbf{Reason} \\
\midrule
Designation of $\sigma$ & $sk_v$ enters the verification exponent (Diffie--Hellman / pairing) & encrypt $\sigma$ to $pk_v$ via a PKE & no free shared secret \\
Signer/verifier ambiguity & algebraic symmetry of $\sigma$ & two-party FS-with-aborts OR-proof (witness-indist.) & no DH symmetry \\
Confirmation & re-derive shared randomness $r$; rebuild a linked $\tilde\sigma$ & ordinary signature $\Sign_{sk_s}(\mu\,\|\,\pi\,\|\,\sigma)$ & FS randomness not reusable \\
Key material & one key pair (signing $=$ designation) & signing key pair $+$ encryption key pair $(ek,dk)$ & signing key cannot designate \\
Withdrawability from & DDH (Schnorr) / DBDH (BLS) & decisional MLWE (PKE) $+$ statistical WI & lattice counterpart \\
Unforgeability from & CDH / DL & MSIS / SelfTargetMSIS & lattice counterpart \\
Overall structure & monolithic designated-verifier object & public ambiguous core $+$ encryption layer & composition forced \\
\bottomrule
\end{tabularx}
\caption{Comparison of the Liu--Baek--Susilo withdrawable signature with the lattice construction of this work. Every difference follows from the absence of a free Diffie--Hellman-style shared secret in the lattice setting (Section \ref{sec:comparison}).}
\label{tab:comparison}
\end{table*}

\paragraph{Ambiguity and confirmation.}
For ambiguity we use a two-clause Fiat--Shamir-with-aborts OR-proof (Algorithm \ref{alg:dil-wsign}): a proof of knowledge of a short witness for $\bt_s$ \emph{or} for $\bt_v$, which either party can produce by executing its own branch and simulating the other. This is the natural lattice substitute for the group symmetry, and it agrees with the authors' own description of withdrawable signatures as two-party ring signatures augmented with a transformation stage \cite{liu2023}. For confirmation, where \cite{liu2023} reconstructs the shared randomness $r$ and rebuilds an algebraically linked $\tilde\sigma$, we let $\Confirm$ emit an ordinary signature $\Sign_{sk_s}(\mu\,\|\,\pi\,\|\,\sigma)$ on the entire withdrawable object: the masking randomness of Fiat--Shamir with aborts is not reusable in the way a discrete-logarithm nonce is, and binding by a signature is both simpler and reducible to the unforgeability of the underlying scheme (Theorem \ref{thm:unforge}).

\paragraph{Designation by encryption: a PKE, not a KEM.}
Lacking the free shared secret, we transport the designation explicitly: the scheme of Section \ref{sec:sdvs} encrypts the publicly verifiable, ambiguous signature $\rho$ to the verifier, so that only a holder of $dk_v$ can recover and check it. This encryption must place \emph{the same} chosen key $K$ under \emph{two} public keys: $ek_v$, so that the verifier can decrypt and verify, and $ek_s$, so that the signer can decrypt and run $\Confirm$ statelessly (it is given only $\sigma$). A key-encapsulation mechanism returns a fresh \emph{random} key for each encapsulation and recipient, so $\mathsf{Encaps}(ek_v)$ and $\mathsf{Encaps}(ek_s)$ would produce two independent keys and the parties could never reconstruct the same $\rho=C\oplus G(K)$; delivering one shared chosen $K$ to both recipients is exactly encryption of a chosen message, i.e.\ a public-key encryption scheme. We therefore instantiate $\PKE$ with the CPA-secure module-LWE encryption underlying ML-KEM (Kyber), and not with the KEM itself. This encryption is precisely what the group construction obtains for nothing: there the shared key is the Diffie--Hellman value that both parties derive from their public keys, so \cite{liu2023} require no encryption at all, whereas we must transport it --- and to both parties --- which is why a public-key encryption scheme appears in our construction where theirs has none.

\paragraph{Consequences.}
The two schemes rest on parallel hardness assumptions: decisional MLWE replaces DDH/DBDH for withdrawability, and MSIS/SelfTargetMSIS replace CDH/DL for unforgeability. The lattice route is less compact --- an extra encryption key pair per party, and a withdrawable signature carrying two ciphertexts and a one-time pad rather than three group elements --- but it gains two properties: the withdrawable signature is pseudorandom, concealing from outsiders even the existence of a signature, and non-transferability against the designated verifier is perfect ($\varepsilon_{\mathrm{zk}}=0$ for the no-hint instantiation) rather than computational. We regard these as necessary adaptations of the Liu--Baek--Susilo notion to the post-quantum setting rather than departures from it.

\section{Conclusion and future research}
This work uses the ideas in \cite{liu2023} to extend the Fiat--Shamir with aborts paradigm \cite{lyubashevsky2009} with withdrawability and defines a general construction for withdrawable lattice-based digital signature schemes.

We demonstrated our approach by creating a withdrawable version of Dilithium, though the same principles could be applied to other signature schemes like HAETAE \cite{cheon2024}. Our construction maintains the security properties of the underlying signature scheme while adding the ability to withdraw signatures when needed by the signer.

Several directions remain for future research, including optimizing our construction's efficiency and exploring additional features such as blindness, multiparty capabilities, or enhancing this construction with extended withdrawability, where rather than limiting verification to a specific entity, we can ensure the universal verifiability of the withdrawable signature by employing any signature scheme that can maintain signer ambiguity (this is work done in \cite{liu2024}).

Another potential line of research is given by practical applications, particularly how these constructions could enhance quantum resistance in blockchain systems. The ability to withdraw signatures could prove valuable in blockchain environments where transaction revocation is desirable but traditionally difficult to implement.


\printbibliography

\end{document}